\newcommand{\muspace}{\mspace{1mu}}
\DeclareRobustCommand{\scond}{\mathchoice{\muspace\vert\muspace}{\vert}{\vert}{\vert}}
\newcommand{\Cc}{\mathcal{C}}
\newcommand{\Wc}{\mathcal{W}}
\newcommand{\Xc}{\mathcal{X}}
\newcommand{\Yc}{\mathcal{Y}}
\newcommand{\Zc}{\mathcal{Z}}
\newcommand{\aep}{{\mathcal{T}_{\epsilon}^{(n)}}}
\newcommand{\Wt}{{\tilde{W}}}
\newcommand{\Yt}{{\tilde{Y}}}
\newcommand{\wt}{{\tilde{w}}}
\newcommand{\yt}{{\tilde{y}}}
\def\a{\alpha}
\def\b{\beta}
\def\e{\epsilon}
\DeclareMathOperator\E{\text{\sffamily\upshape E}}
\let\P\relax
\DeclareMathOperator\P{\text{\sffamily\upshape P}}
\newcommand{\U}{\mathrm{Unif}}
\def\textiid{i.i.d.\@\xspace}
\newcommand\iid{\ifmmode\text{ i.i.d. } \else \textiid \fi}
\def\mathllap{\mathpalette\mathllapinternal}
\def\mathllapinternal#1#2{%
  \llap{$\mathsurround=0pt#1{#2}$}}
\def\clap#1{\hbox to 0pt{\hss#1\hss}}
\def\mathclap{\mathpalette\mathclapinternal}
\def\mathclapinternal#1#2{%
  \clap{$\mathsurround=0pt#1{#2}$}}
\let\oldstackrel\stackrel
\renewcommand{\stackrel}[2]{\oldstackrel{\mathclap{#1}}{#2}}
\newcommand{\Gbar}{G\mathllap{\overline{\vphantom{G}\hphantom{\rule{6.25pt}{0pt}}}\mspace{1.1mu}}}
\renewcommand{\hbar}{h\mathllap{\overline{\vphantom{h}\hphantom{\rule{4.6pt}{0pt}}}\mspace{0.77mu}}}
\newcommand{\urltilde}{\kern -.06em\lower .3em\hbox{~}\kern .02em}
\newcommand{\bern}{\mathrm{Bern}}
\newcommand{\one}{\mathbf{1}}
\newcommand{\real}{\mathrm{R}}
\newcommand{\xb}{\bar{x}}
\newcommand{\yb}{\bar{y}}
\newcommand{\zb}{\bar{z}}
\newcommand{\pb}{\bar{p}}
\begin{document}
\newtheorem{definition}{Definition}
\newtheorem{theorem}{Theorem}
\newtheorem{corollary}{Corollary}
\newtheorem{lemma}{Lemma}
\newtheorem{problem}{Problem}
\newtheorem{proposition}{Proposition}
\newenvironment{claimproof}[1]{\par\noindent\underline{Proof:}\space#1}{\hfill $\blacksquare$}
\newtheorem{claim}{Claim}
\IEEEoverridecommandlockouts

\title{Exact Common Information}
\author{
    \IEEEauthorblockN{Gowtham Ramani Kumar}
    \IEEEauthorblockA{Electrical Engineering\\
    Stanford University\\
    Email: gowthamr@stanford.edu}

    \and

    \IEEEauthorblockN{Cheuk Ting Li}
    \IEEEauthorblockA{Electrical Engineering\\
    Stanford University\\
    Email: ctli@stanford.edu}

    \and

    \IEEEauthorblockN{Abbas El Gamal}
    \IEEEauthorblockA{Electrical Engineering\\
    Stanford University\\
    Email: abbas@stanford.edu}
    \thanks{ This work was partially supported by Air Force grant FA9550-10-1-0124.   }
}

\maketitle

\thispagestyle{plain}
\pagestyle{plain}

\begin{abstract}
This paper introduces the notion of exact common information, which is the minimum description length of the common randomness needed for the exact distributed generation of two correlated random variables $(X,Y)$. We introduce the quantity $G(X;Y)=\min_{X\to W \to Y} H(W)$ as a natural bound on the exact common information and study its properties and computation. We then introduce the exact common information rate, which is the minimum description rate of the common randomness for the exact generation of a 2-DMS $(X,Y)$. We give a multiletter characterization for it as the limit $\Gbar(X;Y)=\lim_{n\to \infty}(1/n)G(X^n;Y^n)$. While in general $\Gbar(X;Y)$ is greater than or equal to the Wyner common information, we show that they are equal for the Symmetric Binary Erasure Source. We  do not know, however, if the exact common information rate has a single letter characterization in general.

\end{abstract}

\IEEEpeerreviewmaketitle
\section{Introduction}

What is the common information between two correlated random variables or sources? This is a fundamental question in information theory with applications ranging from distributed generation of correlated sources~\cite{Wyner1975a} and secret keys~\cite{Ahlswede--Csiszar1993} to joint source channel coding~\cite{Cover--El-Gamal--Salehi1980}, among others. One of the most studied notions of common information is due to Wyner~\cite{Wyner1975a}. Let $(\Xc\times \Yc, p(x,y))$ be a 2-DMS (or correlated sources $(X,Y)$ in short). The Wyner common information $J(X;Y)$ between the sources $X$ and $Y$ is the minimum common randomness rate needed to generate $(X,Y)$ with asymptotically vanishing total variation. Wyner established the single-letter characterization
\[
J(X;Y) = \min_{W:\, X \to W \to Y} I(W;X,Y).
\]
In this paper we introduce the notion of {\em exact common information}, which is closely related in its operational definition to the Wyner common information. While the Wyner setup assumes block codes and {\em approximate} generation of the 2-DMS $(X,Y)$, our setting assumes variable length codes and {\em exact} generation of $(X,Y)$. As such, the relationship between our setup and Wyner's is akin to that between the zero-error and the lossless source coding problems. In the source coding problem the entropy of the source is the limit on both the zero-error and the lossless compression. Is the limit on the exact common information rate the same as the Wyner common information? We show that they are the same for the Symmetric Binary Erasure Source (SBES) as defined in Section~\ref{sec:1-shot}. We do not, however, know  if they are equal in general.

The rest of this paper is organized as follows. In the next section we introduce the exact distributed generation problem and define the exact common information. We introduce the ``common-entropy" quantity $G(X;Y)=\min_{X\to W \to Y} H(W)$ as a natural bound on the exact common information and study some of its properties. In Section~\ref{sec:definition}, we define the exact common information rate for a 2-DMS. We show that it is equal to the limit $\Gbar(X;Y)=(1/n)G(X^n;Y^n)$ and that it is in general greater than or equal to the Wyner common information. One of the main results in this paper is to show that $\Gbar(X;Y) =J(X;Y)$ for the SBES. A consequence of this result is that the quantity $G(X^k;Y^k)$ can be strictly smaller than $kG(X;Y)$, that is, the per-letter common entropy can be reduced by increasing the dimension. We then introduce the notion of approximate common information rate, which relaxes the condition of exact generation to asymptotically vanishing total variation and show that it is equal to the Wyner common information. As computing the quantity $G(X;Y)$ involves solving a non-convex optimization problem,  in Section~\ref{sec:computingG} we present cardinality bounds on $W$ and use them to find an explicit expression for $G(X;Y)$ when $X$ and $Y$ are binary. Due to space limitation, we do not include many of the proofs. We also mention a connection to the matrix factorization problem in machine learning that would be interesting to explore further.

\section{Definitions and Properties}\label{sec:1-shot}
Consider the distributed generation setup depicted in Figure~\ref{fig:oneshotproblem}. Alice and Bob both have access to common randomness $W$. Alice uses $W$ and her own local randomness to generate $X$ and Bob uses $W$ and his own local randomness to generate $Y$ such that $(X,Y) \sim p_{X,Y}(x,y)$. We wish to find the limit on the least amount of common randomness needed to generate $(X,Y)$ exactly.
   \begin{figure}[h]
    	\begin{center}
     	\psfrag{w}[b]{$W$}
     	\psfrag{T1}[r]{Alice}
    	\psfrag{T2}[l]{Bob}
     	\psfrag{a}[c]{Decoder 1}	
    	\psfrag{b}[c]{Decoder 2}
    	\psfrag{u}[t]{$\hat X$}
    	\psfrag{v}[t]{\,$\hat Y$}
    	\includegraphics[scale=0.55]{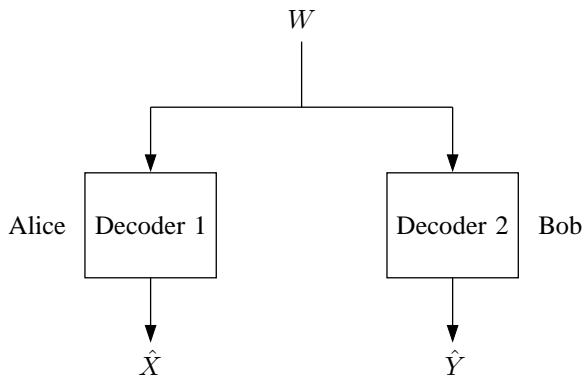}
	\vspace{5pt}
        \caption{Setting for distributed generation of correlated random variables. For exact generation, $(\hat X,\hat Y) \sim p_{X,Y}(x,y)$.}
        \label{fig:oneshotproblem}
    	\end{center}
    \end{figure}

    More formally, we define a {\em simulation code} $(W,R)$ for this setup to consist of
    \begin{itemize}
    \item[$\circ$] A common random variable $W \sim p_{W}(w)$. As a measure of the amount of common randomness, we use the per-letter {\em minimum expected codeword length} $R$ over the set of all variable length {\em prefix-free} zero-error binary codes $\Cc \subset \{0,1\}^*$ for $W$, i.e., $R= \min_{\Cc} \E(L)$, where $L$ is the codeword length of the code $\Cc$ for $W$.
    \item[$\circ$] A stochastic decoder $p_{\hat X|W}(x|w)$ for Alice and a stochastic decoder $p_{\hat Y|W}(y|w)$ for Bob such that $\hat X$ and $\hat Y$ are conditionally independent given $W$.
    \end{itemize}
The random variable pair $(X,Y)$ is said to be exactly generated by the simulation code $(W,R)$ if $p_{\hat X,\hat Y}(x,y)=  p_{X,Y}(x,y)$.
We wish to find the {\em exact common information} $R^*$ between the sources $X$ and $Y$, which is the infimum over all rates $R$ such that the random variable pair $(X,Y)$ can be exactly generated.

Define the following quantity, which can be interpreted as the ``common entropy" between $X$ and $Y$,
\begin{equation}\label{eqn:defineH}
        G(X;Y)=  \min_{W:\,X\to W\to Y} H(W).
\end{equation}
\smallskip
    \noindent{\bf Remark}: We can use $\min$ instead of $\inf$ in the definition of $G(X;Y)$ because the cardinality of $W$ is bounded as we will see in Proposition~\ref{prop:cardbnd}, hence the optimization for computing $G(X;Y)$ is over a closed set.

Following the proof of Shannon's zero-error compression theorem, we can readily show the following.
\begin{proposition}\label{1-shot}
\[
G(X;Y) \le R^* < G(X;Y) + 1.
\]
\end{proposition}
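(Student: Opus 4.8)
The plan is to prove the two inequalities separately, in both cases leaning on the classical bounds relating the entropy of a random variable to the minimum expected length of a prefix-free binary code for it, namely $H(W) \le \min_{\Cc} \E(L) < H(W)+1$ with entropy measured in bits; this is exactly the content of Shannon's noiseless source coding theorem, applied here to the auxiliary variable $W$.

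For the lower bound $G(X;Y)\le R^*$, I would start from an arbitrary simulation code $(W,R)$ that exactly generates $(X,Y)$. By definition the decoders satisfy that $\hat X$ and $\hat Y$ are conditionally independent given $W$, and exact generation means $p_{\hat X,\hat Y}=p_{X,Y}$; hence $p_{X,Y}(x,y)=\sum_w p_W(w)\,p_{\hat X|W}(x|w)\,p_{\hat Y|W}(y|w)$, which is precisely the statement that $X\to W\to Y$ forms a Markov chain. Therefore $W$ is feasible in the minimization defining $G(X;Y)$, so $H(W)\ge G(X;Y)$. Combining with $R\ge H(W)$, the converse bound for prefix-free (indeed uniquely decodable) codes, gives $R\ge G(X;Y)$, and taking the infimum over all simulation codes that exactly generate $(X,Y)$ yields $R^*\ge G(X;Y)$.

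For the upper bound $R^*<G(X;Y)+1$, I would let $W^*$ attain the minimum in \eqref{eqn:defineH}, which exists by the cardinality bound of Proposition~\ref{prop:cardbnd}, so that $X\to W^*\to Y$ and $H(W^*)=G(X;Y)$. Take as Alice's decoder $p_{\hat X|W^*}=p_{X|W^*}$ and as Bob's decoder $p_{\hat Y|W^*}=p_{Y|W^*}$; these are conditionally independent given $W^*$ by the Markov property, and the induced joint law is $\sum_w p_{W^*}(w)\,p_{X|W^*}(x|w)\,p_{Y|W^*}(y|w)=p_{X,Y}(x,y)$, so $(X,Y)$ is exactly generated. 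Applying a Shannon code (codeword lengths $\lceil -\log p_{W^*}(w)\rceil$, prefix-free by Kraft) to $W^*$, whose expected length is strictly less than $H(W^*)+1$, produces a valid simulation code of rate $R<G(X;Y)+1$, hence $R^*\le R<G(X;Y)+1$.

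The argument is routine and I expect no genuine obstacle; the only points that need care are the translation, in both directions, between the operational requirement that Alice's and Bob's decoders be conditionally independent given $W$ and the Markov chain condition $X\to W\to Y$ that appears in the definition of $G(X;Y)$, and the appeal to Proposition~\ref{prop:cardbnd} to guarantee that the minimizer $W^*$ actually exists so that the upper-bound construction is legitimate.
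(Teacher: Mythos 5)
Your proof is correct and follows essentially the same route the paper has in mind: the paper simply invokes ``the proof of Shannon's zero-error compression theorem,'' and you spell this out by applying the standard bounds $H(W)\le \min_{\Cc}\E(L) < H(W)+1$ to the auxiliary variable $W$, together with the straightforward observation that the conditional-independence structure of a simulation code is exactly the Markov condition $X\to W\to Y$ (in both directions). The appeal to Proposition~\ref{prop:cardbnd} to justify existence of a minimizer $W^*$ is a good catch, consistent with the paper's own remark after~\eqref{eqn:defineH}.
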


Computing $G(X;Y)$ is in general quite difficult (see Section~\ref{sec:computingG}). In some special cases, we can find an explicit expression for it.
\smallskip

\noindent{\bf Example 1} The Symmetric Binary Erasure Source (SBES) $(X,Y)$ is defined by
    \begin{align*}
        X&\sim \bern(1/2), \\
        Y&=\begin{cases}
                X & \text{w.p. } 1-p,\\
                e & \text{w.p. } p,
            \end{cases}
    \end{align*}
    where $p$ is the {\em erasure probability} for the source. It can be shown that for the SBES,
    \[
    G(X;Y)=\min \{1,H(p)+1-p\}.
    \]
    Note that the Wyner common information for this source is~\cite{channelsynth}
    \[
        J(X;Y) =
        \begin{cases}
            1 & \text{if }p\le 0.5,\\
            H(p) &     \text{if }p> 0.5.
        \end{cases}
    \]

In the following we present some basic properties of $G(X;Y)$.
\subsection{Properties of $G(X;Y)$}
    \begin{enumerate}
    \item $G(X;Y)\ge 0$ with equality if and only if $X$ and $Y$ are independent.
    \begin{proof}
        First assume $G(X;Y)=0$. Suppose $W$ achieves $G(X;Y)$. Then $X \to W \to Y$ and $H(W)=0$. Thus $W=\phi$, constant. Hence $X,Y$ are independent.

        To show the converse, if $X,Y$ are independent, $X\to \phi \to Y$ and $G(X;Y)\le H(\phi)=0$. Thus $G(X;Y)=0$.
    \end{proof}
    \item$G(X;Y)\ge J(X;Y)$.
    \begin{proof}
        \begin{align*}
            G(X;Y)&=\min_{W:X\to W\to Y} H(W) \\
            &= H(W*) \\
            &\ge I(W*;X,Y) \\
            &\ge \min_{W:X\to W\to Y} I(W;X,Y)\\
            &=J(X;Y).
        \end{align*}
    \end{proof}

    \item {\em Data-processing Inequality:} If $U\to X\to Y$ forms a Markov chain, then $G(U;Y)\le G(X;Y)$.
        \begin{proof}
            Let $W$ achieve $G(X;Y)$. Then $U\to X\to W \to Y$ forms a Markov chain. Hence, $G(U;Y)\le H(W)=G(X;Y)$.
        \end{proof}
    \item\label{itm:conditionalH1} Define $G(X;Y|Z)=\sum_{z \in \Zc} p_Z(z)G(X;Y|Z=z)$. Then $G(X;Y)\le H(Z)+G(X;Y|Z)$.
        \begin{proof}
            For each $Z=z$, choose $W_z$ as the random variable that achieves $G(X;Y|Z=z)$. Then $X\to (Z,W_Z)\to Y$ forms a Markov chain. Therefore,
            \begin{align*}
                G(X;Y) &\le H(Z,W_Z) \\
                &=H(Z)+H(W_Z|Z)\\
                &=H(Z)+\sum_z p_Z(z) H(W_z) \\
                &=H(Z)+G(X;Y|Z).
            \end{align*}
        \end{proof}
    \item If there exist functions $f(X)$ and $g(Y)$ such that $Z=f(X)=g(Y)$, then $G(X;Y) = H(Z)+G(X;Y|Z)$.
        \begin{proof} First observe that if $Z$ satisfies the condition $Z=f(X)=g(Y)$ and $W$ satisfies the Markov condition $X\to W \to Y$, then $Z$ is a function of $W$.
        To see why, note that \[p_{X,Y}(x,y)=\sum_{w\in \Wc}p_W(w)p_{X|W}(x|w)p_{Y|W}(y|w).\]
        Therefore, if $p_{X,Y}(x,y)=0$ and $p_{X|W}(x|w)>0$, then $p_{Y|W}(y|w)=0$.

        Now we will show that if $p_{X|W}(x_1|w)>0$ and $p_{X|W}(x_2|w)>0$, then $f(x_1)=f(x_2)$. If not, for any $y$ such that $g(y)\neq f(x_1)$, $p_{X,Y}(x_1,y)=0$, therefore $p_{X|W}(x_1|w)>0\implies p_{Y|W}(y|w)=0$. Similarly for any $y$ such that $g(y)\neq f(x_2)$, $p_{X,Y}(x_2,y)=0$, therefore, $p_{X|W}(x_2|w)>0\implies p_{Y|W}(y|w)=0$. As a consequence, for any $y$, $p_{Y|W}(y|w)=0$, a contradiction.

        Thus, for any $w$ the set $\{f(x):p_{X|W}(x|w)>0\}$ has exactly one element. If we define $h(w)$ as that unique element, then $f(X)=h(W)$. Therefore, $Z=f(X)=g(Y)=h(W)$.

            To complete the proof, let $W$ achieve $G(X;Y)$. Then,
            \begin{align*}
                G(X;Y)&=H(W)=H(W,Z)\\
                &=H(Z)+H(W|Z)\\
                &=H(Z)+\sum_z p_Z(z)H(W|Z=z)\\
                &\ge H(Z)+\sum_z p_Z(z) G(X;Y|Z=z)\\
                &= H(Z)+ G(X;Y|Z).
            \end{align*}
            This, in combination with Property~\ref{itm:conditionalH1}, completes the proof.
        \end{proof}
    \item\label{prop:minsuff} Let $T(X)$ be a sufficient statistic of $X$ with respect to $Y$ ~(\cite{minimalsufficientstatistic}, pg. 305). Then $G(X;Y)=G(T(X);Y)$. Further, if $W$ achieves $G(X;Y)$, we have $H(W)\le H(T(X))$. Thus a noisy description of $X$ via $W$ may potentially have a smaller entropy than the minimal sufficient statistic, which is a deterministic description.
        \begin{proof}
            Observe that both Markov chains $T(X)\to X\to Y$ and $X\to T(X)\to Y$ hold. Hence by the data-processing inequality, $G(X;Y)=G(T(X);Y)$. Now, since $W$ achieves $G(X;Y)$ and $T(X)\to T(X)\to Y$,
            \[
                H(W)=G(X;Y)=G(T(X);Y)\le H(T(X)).
            \]
        \end{proof}
    \end{enumerate}
\section{Exact Common Information Rate}\label{sec:definition}

The distributed generation setup in Figure~\ref{fig:oneshotproblem} can be readily extended to the $n$-letter setting in which Alice wishes to generate $X^n$ from common randomness $W_n$ and her local randomness and Bob wishes to generate $Y^n$ from $W_n$ and his local randomness such that $p_{\hat X^n,\hat Y^n} (x^n,y^n) \sim \prod_{i=1}^n p_{X,Y}(x_i,y_i)$. We define a simulation code $(W_n,R,n)$ for this setup in the same manner as for the one-shot case.

We say that Alice and Bob can exactly generate the 2-DMS $(X,Y)$ at rate $R$ if for some $n \ge 1$, there exists a $(W_n, R,n)$ simulation code that exactly generates $(X^n,Y^n)$ (since we assume prefix-free codes for $W_n$, we can simulate for arbitrarily large lengths via concatenation of successive codewords). We wish to find the {\em exact common information rate} $R^*$ between the sources $X$ and $Y$, which is the infimum over all rates $R$ such that the 2-DMS $(X,Y)$ can be exactly generated.

Define the ``joint common entropy"
\begin{equation}\label{eqn:defineHn}
        G(X^n;Y^n)=  \min_{W_n:\,X^n\to W_n\to Y^n} H(W_n).
\end{equation}
It can be readily shown that $\lim_{n\to \infty} (1/n)G(X^n;Y^n)=\inf_{n \in \mathbb{N} } (1/n)G(X^n;Y^n)$. Hence, we can define the limiting quantity
\[
   \Gbar(X;Y)=\lim_{n\to \infty} \frac{1}{n}G(X^n;Y^n).
\]
	We are now are ready to establish the following multiletter characterization for the exact common information rate.   	
    \begin{proposition}[Multiletter Characterization of $R^*$]\label{prop:asymptotic}
        The exact common information rate between the components $X$ and $Y$ of a 2-DMS $(X,Y)$ is
		\[
			R^* = \Gbar(X;Y). 
		\]
    \end{proposition}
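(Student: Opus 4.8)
The plan is to prove this exactly as one proves Shannon's source coding theorem, applied to the ``super-symbol'' $W_n$ of an optimal $n$-letter simulation code, with the one-bit prefix-code redundancy amortized over the $n$ letters so that it vanishes in the limit. I would treat the converse ($R^* \ge \Gbar(X;Y)$) and the achievability ($R^* \le \Gbar(X;Y)$) separately, working directly with length-$n$ simulation codes (the concatenation remark in the text already lets such a code simulate the 2-DMS at the same per-letter rate, so there is no loss in doing so).

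For the converse, the first step is to observe that the feasibility constraint on a length-$n$ simulation code is precisely the constraint appearing in the definition (\ref{eqn:defineHn}) of $G(X^n;Y^n)$: a code $(W_n,R,n)$ exactly generates $(X^n,Y^n)$ if and only if there exist stochastic decoders $p_{\hat X^n|W_n}$, $p_{\hat Y^n|W_n}$ whose conditionally independent coupling has marginal $p_{X^n,Y^n}$, i.e.\ if and only if $X^n \to W_n \to Y^n$ realizes $p_{X^n,Y^n}$. Hence every feasible $W_n$ is feasible in (\ref{eqn:defineHn}), so $H(W_n)\ge G(X^n;Y^n)$. Since the rate of the code satisfies $nR = \min_{\Cc}\E(L)\ge H(W_n)$ by the converse to Shannon's prefix-free source coding theorem, we get $R \ge (1/n)G(X^n;Y^n) \ge \inf_m (1/m)G(X^m;Y^m) = \Gbar(X;Y)$. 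Taking the infimum over all $n$ and all feasible codes gives $R^* \ge \Gbar(X;Y)$.

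For achievability, fix $n$, let $W_n^*$ together with its decoders achieve the minimum in (\ref{eqn:defineHn}) — by Proposition~\ref{prop:cardbnd} it may be taken over a finite alphabet, so $H(W_n^*)=G(X^n;Y^n)<\infty$ — and encode $W_n^*$ with a Shannon (or Huffman) prefix-free binary code, whose expected length is $\E(L) < H(W_n^*)+1 = G(X^n;Y^n)+1$. This is a length-$n$ simulation code of rate $R_n \le (1/n)\E(L) < (1/n)G(X^n;Y^n) + 1/n$, and by the concatenation remark it exactly generates the 2-DMS, so $R^* \le R_n < (1/n)G(X^n;Y^n) + 1/n$ for every $n$. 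Letting $n\to\infty$ and using the already-noted fact that $\lim_n (1/n)G(X^n;Y^n)=\Gbar(X;Y)$ yields $R^* \le \Gbar(X;Y)$. Combining the two bounds gives $R^* = \Gbar(X;Y)$.

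I do not expect a genuine obstacle here — the argument is essentially a packaging of Proposition~\ref{1-shot} (applied letter-block by letter-block) together with the subadditivity of $G(X^n;Y^n)$ cited before the statement. The only points that need care are: (i) checking that allowing arbitrary stochastic decoders buys nothing, i.e.\ that the feasible $W_n$ for length-$n$ codes coincide exactly with the feasible set of (\ref{eqn:defineHn}); and (ii) ensuring the unavoidable one-bit prefix-code overhead is divided by $n$ and therefore disappears in the limit, which is precisely why the multiletter quantity $\Gbar$ — rather than $G$ itself — is the correct characterization.
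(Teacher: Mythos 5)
Your proposal is correct and follows essentially the same route as the paper: reduce to Shannon's prefix-free source coding theorem applied to $W_n$ (achievability via a Huffman/Shannon code with expected length under $G(X^n;Y^n)+1$, converse via $\min_{\Cc}\E(L)\ge H(W_n)\ge G(X^n;Y^n)$), then let $n\to\infty$ and use the subadditive limit $\lim_n (1/n)G(X^n;Y^n)=\Gbar(X;Y)$. The paper's proof is terser but makes the same two-sided appeal to Shannon's zero-error source coding theorem and the same amortization of the one-bit overhead over $n$.
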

    \begin{proof} 
        \emph{Achievability:} Suppose $R>\Gbar(X;Y)$. We will show that the rate $R$ is achievable.

         Since $\Gbar(X;Y)=\lim_{n\to\infty}(1/n)G(X^n;Y^n)$, $R\ge (1/n) (G(X^n;Y^n) + 1)$ for $n$ large enough. By the achievability part of Shannon's zero-error source coding theorem, it is possible to exactly generate $(X^n,Y^n)$ at rate at most $(1/n)\left(G(X^n;Y^n) + 1\right)$. Hence rate $R$ is achievable and thus $R^*\le \Gbar(X;Y)$.
		
		\emph{Converse:} Now suppose a rate $R$ is achievable. Then there exists a $(W_n,R,n)$- simulation code that exactly generates $(X^n,Y^n)$. Therefore, by the converse for Shannon's zero-error source coding theorem, $R\ge (1/n) G(X^n;Y^n)$. Since $(1/n)G(X^n;Y^n)\ge \Gbar(X;Y)$, we conclude that $R^*\ge \Gbar(X;Y)$.
    \end{proof}

As expected the exact common information rate is greater than or equal to the Wyner common information.
\begin{proposition}\label{prop:bound}
\[
\Gbar(X;Y) \ge J(X;Y).
\]
\end{proposition}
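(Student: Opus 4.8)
The plan is to show that $\tfrac{1}{n}G(X^n;Y^n) \ge J(X;Y)$ for every $n$, and then take the limit. Fix $n$ and let $W_n$ achieve $G(X^n;Y^n)$, so that $X^n \to W_n \to Y^n$ and $H(W_n) = G(X^n;Y^n)$. By Property~2 in Section~\ref{sec:1-shot} applied to the pair $(X^n,Y^n)$, we have $G(X^n;Y^n) \ge J(X^n;Y^n)$, where $J(X^n;Y^n)$ denotes the Wyner common information of the source $(X^n,Y^n)$ (treated as a single pair of random variables). So it suffices to show $J(X^n;Y^n) \ge n\, J(X;Y)$, i.e.\ that the Wyner common information is superadditive for product sources.

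For the superadditivity of $J$, I would take any $W$ with $X^n \to W \to Y^n$ and lower-bound $I(W;X^n,Y^n)$. The standard single-letterization goes as follows: write
\begin{align*}
I(W;X^n,Y^n) &= \sum_{i=1}^n I(W;X_i,Y_i \mid X^{i-1},Y^{i-1}) \\
&= \sum_{i=1}^n \bigl[ H(X_i,Y_i \mid X^{i-1},Y^{i-1}) - H(X_i,Y_i \mid W,X^{i-1},Y^{i-1}) \bigr].
\end{align*}
Since the source is memoryless, $H(X_i,Y_i \mid X^{i-1},Y^{i-1}) = H(X_i,Y_i)$. Setting $W_i = (W,X^{i-1},Y^{i-1})$, one checks that $X_i \to W_i \to Y_i$ is a Markov chain: conditioned on $W$ the coordinates $X^n$ and $Y^n$ are independent, hence conditioned on $(W,X^{i-1},Y^{i-1})$ the variables $X_i$ and $Y_i$ are independent. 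Therefore each term $I(W;X_i,Y_i\mid X^{i-1},Y^{i-1}) = H(X_i,Y_i) - H(X_i,Y_i\mid W_i) = I(W_i; X_i,Y_i) \ge \min_{W': X_i\to W'\to Y_i} I(W';X_i,Y_i) = J(X;Y)$, the last equality using that $(X_i,Y_i)$ has the same distribution as $(X,Y)$. Summing gives $I(W;X^n,Y^n) \ge n\,J(X;Y)$; minimizing over $W$ yields $J(X^n;Y^n) \ge n\,J(X;Y)$.

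Combining, $\tfrac{1}{n}G(X^n;Y^n) \ge \tfrac{1}{n}J(X^n;Y^n) \ge J(X;Y)$ for all $n$, and letting $n\to\infty$ gives $\Gbar(X;Y) = \lim_{n\to\infty}\tfrac{1}{n}G(X^n;Y^n) \ge J(X;Y)$.

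I do not expect a serious obstacle here; the only point requiring a little care is the verification of the Markov chain $X_i \to (W,X^{i-1},Y^{i-1}) \to Y_i$, which relies on the memorylessness of the source together with the defining conditional independence $X^n \to W \to Y^n$ of the simulation code. An alternative shortcut, if one prefers to avoid re-deriving superadditivity, is to invoke directly the fact that the Wyner common information rate of a 2-DMS equals its single-letter value $J(X;Y)$ (Wyner's theorem), so any scheme achieving exact generation of $(X^n,Y^n)$ at rate $R$ in particular achieves approximate generation, forcing $R \ge J(X;Y)$; then $R^* \ge J(X;Y)$ and Proposition~\ref{prop:asymptotic} identifies $R^*$ with $\Gbar(X;Y)$. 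I would present the direct superadditivity argument as the main proof since it is self-contained.
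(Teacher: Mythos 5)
Your proof is correct and follows essentially the same route as the paper's: lower-bound $G(X^n;Y^n)=H(W_n)\ge I(W_n;X^n,Y^n)$, single-letterize via the chain rule using memorylessness of the source and the Markov structure $X^n\to W_n\to Y^n$ (which implies $X_i\to W_n\to Y_i$), and conclude that each per-letter term is at least $J(X;Y)$. The paper simply uses $W_n$ itself as the auxiliary in each single-letter term (via $I(W_n;X^n,Y^n)\ge\sum_i I(W_n;X_i,Y_i)$, appealing to conditioning-reduces-entropy) rather than your $W_i=(W_n,X^{i-1},Y^{i-1})$, but the two choices give the same bound.
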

The proof of this result is in Appendix~\ref{proof:greater}.

In the following section, we show that they are equal for the SBES in Example 1. We do not know if this is the case in general, however.

\subsection{Exact Common Information of The SBES}

    We will need the following result regarding computing the Wyner common information for the SBES.
		\begin{lemma}\label{clm:sbesWchoice}
            To compute $J(X;Y)$ for the SBES, it suffices to consider $W$ of the form
            \[
                W=\begin{cases}
                X & \text{ w.p. }1-p_1,\\
                e & \text{ w.p. }p_1,
                \end{cases}
            \]
            and
            \[
                Y=\begin{cases}
                W & \text{ w.p. }1-p_2,\\
                e & \text{ w.p. }p_2,
                \end{cases}
            \]
            where $p_1,p_2$ satisfy $p_1+p_2-p_1p_2=p$, the erasure probability of the SBES.
        \end{lemma}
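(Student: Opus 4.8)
The plan is to show that for the SBES, any random variable $W$ satisfying the Markov chain $X \to W \to Y$ can, without loss of generality for the purpose of computing $J(X;Y)$, be replaced by one of the erasure-channel form claimed. The key structural fact is a consequence of Property~5 of $G(X;Y)$ applied in the mutual-information setting: since $Y$ is a (stochastically) degraded version of $X$ through an erasure channel, conditioned on the event $\{Y \neq e\}$ we have $Y = X$, so $X$ and $Y$ share the common part given by the erasure pattern. More precisely, I would first argue that for \emph{any} valid $W$, the pair $(X,Y)$ decomposes according to whether $Y$ is erased, and that we may take $W$ to reveal at least whether $X$ will be ``passed through'' or not.

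First I would recall that $I(W;X,Y) = I(W;Y) + I(W;X \mid Y)$, and analyze the second term: given $Y = y$ with $y \neq e$, we know $X = y$ deterministically, so $I(W;X \mid Y=y) = 0$; given $Y = e$, $X$ is still $\bern(1/2)$. Hence $I(W;X,Y) = I(W;Y) + p \cdot I(W; X \mid Y = e)$. This already shows $W$ only needs to (i) help determine $Y$ and (ii) on the erasure event, help determine $X$. The natural candidate is the two-stage erasure cascade: $W$ is a partially-erased copy of $X$ (erasure prob.\ $p_1$), and $Y$ is a further-erased copy of $W$ (erasure prob.\ $p_2$), with $p_1 + p_2 - p_1 p_2 = p$ so the overall channel $X \to Y$ is erasure with probability $p$. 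Here $p_1$ ranges over $[0,p]$, interpolating between $W = X$ (when $p_1 = 0$) and $W$ essentially useless (when $p_1 = p$, $p_2 = 0$, i.e.\ $W = Y$).

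The core step is the reduction: given an arbitrary optimal $W$, construct a $W'$ of the prescribed form with $I(W';X,Y) \le I(W;X,Y)$. I would do this via a symmetrization/data-processing argument. By the symmetry of the SBES under $X \mapsto 1-X$ (equivalently $(X,Y) \mapsto (1 \oplus X, \, 1 \oplus Y)$ with the convention $1 \oplus e = e$), we may symmetrize $W$ so that the channel $X \to W$ and $W \to Y$ are themselves symmetric; a convexity argument (mutual information is convex in the channel for fixed input, and the symmetrized $W$ can be realized by appending a fair coin that selects between $W$ and its flipped version) shows this does not increase $I(W;X,Y)$. A symmetric channel from a binary $X$ to $W$ that also feeds a symmetric degraded channel to an erasure output $Y$ must, after merging output symbols that induce the same posterior on $(X,Y)$, collapse to the erasure form: the only symmetric ``intermediate'' behaviors are ``copy $X$'' or ``erase,'' so $W$ reduces to $X$-with-erasure-probability-$p_1$. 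Then $p_2$ is forced by the requirement that $X \to W \to Y$ reproduce the SBES, giving $p_1 + p_2 - p_1 p_2 = p$.

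The main obstacle is making the ``collapse to erasure form'' rigorous: a general $W$ may have large (even continuous) alphabet and an arbitrary symmetric channel need not literally be an erasure channel. I expect to handle this by grouping the values of $W$ according to the induced posterior $p_{X|W}(\cdot|w)$ — by symmetry these posteriors come in flip-pairs $(\alpha, 1-\alpha)$ plus the symmetric one $(1/2,1/2)$ — and then observing that replacing each $w$ by its posterior-type loses nothing in $I(W;X,Y)$ (it is a sufficient statistic) while merging all ``informative'' types into a single ``$W=X$'' symbol can only increase information, so the optimum is attained at the extreme where there are only two types: ``$W$ determines $X$'' and ``$W$ says nothing.'' This yields exactly the two-parameter family, and one then optimizes over $p_1 \in [0,p]$ separately; that one-dimensional optimization is routine and can be deferred to the subsequent computation of $\Gbar(X;Y)$ for the SBES.
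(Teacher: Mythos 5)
Your high-level plan---reduce to a few ``posterior types'' and then symmetrize by convexity---is broadly the right shape, and the symmetrization step you sketch (average the channel with its bit-flipped image, invoke convexity of $I(W;X,Y)$ in $p_{W|X,Y}$ for fixed $p_{X,Y}$) can be made to work. However, the argument as written has a genuine gap precisely at the step you flag as ``the main obstacle,'' and that gap is not just a matter of bookkeeping.

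The missing ingredient is the support constraint that the paper (via~\cite{channelsynth}, Appendix A) relies on: because $p_{X,Y}(0,1)=p_{X,Y}(1,0)=0$ and $X\to W\to Y$ must hold, \emph{every} value $w$ necessarily falls into one of three categories: either $p_{X|W}(0|w)=1$, or $p_{X|W}(1|w)=1$, or $p_{Y|W}(e|w)=1$. In your language, the posterior $p_{X|W}(\cdot|w)$ can be a nondegenerate $(\alpha,1-\alpha)$ with $\alpha\in(0,1)$ \emph{only if} $Y=e$ with probability one given $W=w$. Without stating and using this fact, your ``merge all informative types into a single symbol'' step fails: merging a $w$ with posterior $(1,0)$ together with a $w'$ having posterior $(\alpha,1-\alpha)$, $\alpha\in(0,1)$, destroys the conditional independence $X\perp Y\mid W$ (the merged symbol would assign positive probability to both $X=1$ and $Y=0$, yet $p_{X,Y}(1,0)=0$). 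Once the support constraint is in hand, the correct move is to merge only the $w$'s with $p_{Y|W}(e|w)=1$ into a single erasure symbol, leave the two degenerate symbols $0$ and $1$ intact, and note that this merging is a deterministic function of $W$ preserving the Markov chain, hence by data processing it can only \emph{decrease} $I(W;X,Y)$---the opposite of your phrase ``can only increase information,'' which as written points the inequality the wrong way for a minimization.

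Two smaller points. First, the appeal to Property~5 (the common-part decomposition) is not applicable here: the SBES has no nontrivial $Z=f(X)=g(Y)$ (its G\'acs--K\"orner common part is trivial), and the erasure pattern is a function of $Y$ but not of $X$, so it is not a common part. Second, the decomposition $I(W;X,Y)=I(W;Y)+p\,I(W;X\mid Y=e)$ is correct but is never actually used in the remainder of your argument. The paper's proof (by reference to~\cite{channelsynth}) is essentially the support-category observation plus the merging-by-category step, followed by the symmetrization; supplying that structural lemma is what would make your write-up complete.
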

        The proof follows by~\cite{channelsynth}, Appendix A.

     We now present the main result on exact common information rate in this paper.
    \begin{theorem}\label{thm:SBES}
        If $(X,Y)$ is an SBES, then $\Gbar(X;Y)=J(X;Y)$.
    \end{theorem}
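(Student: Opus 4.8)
The bound $\Gbar(X;Y)\ge J(X;Y)$ is Proposition~\ref{prop:bound}, so all that is needed is the reverse inequality $\Gbar(X;Y)\le J(X;Y)$. Since $\Gbar(X;Y)=\lim_{n\to\infty}(1/n)G(X^n;Y^n)$, it suffices to produce, for every $\varepsilon>0$, some $n$ and some $W_n$ with $X^n\to W_n\to Y^n$ and $(1/n)H(W_n)\le J(X;Y)+\varepsilon$. The case $p\le 1/2$ is immediate: by Example~1, $G(X;Y)=\min\{1,H(p)+1-p\}=1=J(X;Y)$, so even the one-shot bound already gives $\Gbar(X;Y)\le G(X;Y)=J(X;Y)$. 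Assume henceforth $p>1/2$, where $J(X;Y)=H(p)$.

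For $p>1/2$ I would build on the Wyner-optimal decomposition of Lemma~\ref{clm:sbesWchoice}. Taking $p_1=2p-1$ (so $p_2=1/2$), let $V$ be the rate-$(2p-1)$ erasure of $X$, so that $X\to V\to Y$ with $V\to Y$ the fair-coin erasure, and one checks $I(V;X,Y)=H(p)$. The naive choice $W_n=V^n$ is valid but wasteful: $(1/n)H(V^n)=H(V)=H(2p-1)+2(1-p)$, which exceeds $H(p)$ exactly by $(1/n)H(V^n|X^n,Y^n)=p\,H\big((2p-1)/p\big)>0$, the part of $V^n$ that is irrelevant to $(X^n,Y^n)$. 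The plan is to squeeze this waste out by replacing the \emph{sequence} $V^n$ with an \emph{index} into a codebook: fix a codebook $\{v^n(j)\}_j$ of roughly $2^{nH(p)}$ length-$n$ sequences (drawn i.i.d.\ from $p_V$), let the common randomness $W_n=J$ point to a codeword, and let the two decoders be the memoryless channels $p_{\hat X^n|W_n}(\cdot|j)=p_{X|V}^{\otimes n}(\cdot|v^n(j))$ and $p_{\hat Y^n|W_n}(\cdot|j)=p_{Y|V}^{\otimes n}(\cdot|v^n(j))$, which are conditionally independent given $W_n$ as required. Wyner's soft-covering (channel resolvability) argument shows that this synthesizes $(X^n,Y^n)$ up to vanishing total variation using an index rate of only $I(V;X,Y)=H(p)$, so $(1/n)H(W_n)\to H(p)$.

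The heart of the theorem is to upgrade this to \emph{exact} generation without spending extra rate. I would do this by a two-regime construction. Let $\Gc$ be a ``good'' set of pairs $(x^n,y^n)$ of probability $1-\delta_n$ with $\delta_n\to 0$ (e.g.\ the jointly typical set, on which the codebook scheme reproduces the target conditional law well), and let $W_n=(F,W_n')$ where $F$ flags whether $(X^n,Y^n)\in\Gc$; on $\{F=0\}$ take $W_n'$ to be an exact but crude code for $p_{X^n,Y^n|\Gc^c}$ (for instance $W_n'=(X^n,Y^n)$, so $H(W_n')\le n(H(p)+1-p)$), and on $\{F=1\}$ take $W_n'$ to be a suitably adjusted version of the codebook scheme — a slightly enlarged codebook, non-uniform index weights, and decoders tilted away from the pure memoryless channels — engineered so that the \emph{overall} mixture is identically $p_{X^n,Y^n}$ while still keeping $\hat X^n\perp\hat Y^n\,|\,W_n$. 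Since $H(F)=H(\delta_n)=o(1)$ and the $\{F=0\}$ branch adds at most $\delta_n\cdot n(H(p)+1-p)=o(n)$ to $H(W_n)$, this yields $(1/n)G(X^n;Y^n)\le (1/n)H(W_n)=H(p)+o(1)$, hence $\Gbar(X;Y)\le H(p)=J(X;Y)$.

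The step I expect to be the real obstacle is this last one. Soft covering is inherently approximate, and forcing the synthesized law to be \emph{exactly} $p_{X^n,Y^n}$ — while simultaneously preserving conditional independence of $\hat X^n$ and $\hat Y^n$ given the single common variable $W_n$ and keeping $H(W_n)\le nH(p)+o(n)$ — is delicate; in particular $v^n(J)$ cannot be distributed as $p_V^{\otimes n}$ on a $2^{nH(p)}$-sized codebook, so the index law and the decoders must be designed to compensate for the small codebook \emph{exactly}. Once that construction is in place, the remaining ingredients (the $p\le 1/2$ case, the lower bound from Proposition~\ref{prop:bound}, and the rate accounting of the two regimes) are routine.
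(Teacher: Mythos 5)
Your proposal takes a genuinely different route from the paper, and it has a real gap at exactly the step you flag as "the heart of the theorem."

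The paper does \emph{not} try to upgrade soft covering to exactness by tilting codebooks and index distributions. Instead, it exploits the product structure of the SBES. Write $\Yt_i\in\{d,e\}$ for the erasure pattern of $Y^n$; then $\Yt^n$ is i.i.d.\ $\bern(p)$ and \emph{independent of} $X^n$, and $Y^n$ is a deterministic function of $(X^n,\Yt^n)$. The paper covers only the erasure pattern: it draws a codebook of roughly $2^{nI(\Yt;\Wt)}$ sequences, and sets $\Wt_n$ to be a \emph{deterministic function of} $\Yt^n$ (the codeword jointly typical with $\Yt^n$, or $\Yt^n$ itself on the uncovered exception set). Because the central node first realizes $\Yt^n$ exactly i.i.d.\ and then computes $\Wt_n$ from it, Bob can regenerate $\Yt^n$ exactly via the true conditional law $p_{\Yt^n\,|\,\Wt_n}$ --- no tilting, no residual total-variation error. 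The bits of $X^n$ needed by Bob (at non-erased coordinates of $\Wt_n$, a superset of the non-erased coordinates of $\Yt^n$) are sent verbatim in $M$; Alice fills in the rest with local fair coins, so $X^n$ is exactly i.i.d.\ $\bern(1/2)$ and independent of $\Yt^n$. The pair $(\Wt_n,M)$ is the common $W_n$, the Markov chain $X^n\to(\Wt_n,M)\to Y^n$ holds, and $H(\Wt_n,M)\le n(I(\Yt;\Wt)+1-p_1+\delta(\e)) = n(I(W;X,Y)+\delta(\e))$.

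The gap in your plan is that the two-regime mixture does not obviously exist. You want $p_{X^n,Y^n}$ to decompose as $(1-\delta_n)\,p^{(1)}+\delta_n\,p^{(0)}$, where $p^{(1)}$ is realized by a (non-uniform, tilted) codebook scheme with a single common index $W_n'$ making $\hat X^n\perp\hat Y^n\mid W_n'$, and $H(W_n')\approx nH(p)$. Soft covering bounds total variation but gives you no control over \emph{which} direction the error points, so there is no guarantee that the residual $p_{X^n,Y^n}-(1-\delta_n)p^{(1)}$ is a nonnegative measure; and even granting that, designing the tilt so that the exact conditional-independence constraint and the exact mixture identity hold \emph{simultaneously} is an unsolved sub-problem that you acknowledge but do not close. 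This is precisely the obstruction the paper's construction is built to avoid: by making $\Wt_n$ a deterministic function of the already-exactly-sampled $\Yt^n$, correctness of Bob's regeneration is automatic from Bayes' rule, with no approximation to repair. (A small side note: if $W_n'=(X^n,Y^n)$ on the bad set, the right bound is $H(W_n')\le n\,(1+H(p))$, not $n(H(p)+1-p)$; this does not affect the $o(n)$ conclusion but the stated inequality is incorrect.)
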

	\begin{proof}
       In general $\Gbar(X;Y)\ge J(X;Y)$. We will now provide an achievability scheme to show that for SBES, $\Gbar(X;Y)\le J(X;Y)$.

         Choose a $W$ as defined in Lemma \ref{clm:sbesWchoice} and define
        \begin{align*}
            \tilde{W}=\begin{cases}
            d & \text{ if }W\in \{0,1\},\\
            e & \text{ if }W=e,
            \end{cases} \\
            \tilde{Y}=\begin{cases}
            d & \text{ if }Y\in \{0,1\},\\
            e & \text{ if }Y=e.
            \end{cases}
        \end{align*}
        Note that $\Yt^n$, denoting the location of the erasures, is i.i.d. $\bern(p)$ (with $1 \leftarrow e$, $0 \leftarrow d$) and independent of $X^n$. Furthermore, $Y^n$ is a function of $X^n$ and $\Yt^n$.
        \smallskip

        \noindent{\emph{Codebook Generation:}}
        Generate a codebook $\Cc$ consisting of $2^{n(I(\Yt;\Wt)+\e)}$ sequences $\wt^n(m)$, $m\in [1: 2^{n(I(\Yt;\Wt)+\e)}]$, that ``covers" almost all the $\yt^n$ sequences except for a subset of small probability $\delta(\e)$. By the covering lemma~(\cite{abbasbook}, page 62), such a codebook exists for large enough $n$.

        This lets us associate every covered sequence $\yt^n$ with a unique $\wt^n=\wt^n(\yt^n)\in \Cc$ such that $(\yt^n,\wt^n)\in \aep$.

        Define the random variable
        \begin{equation}\label{eqn:wtn}
            \Wt_n=\begin{cases}
                \wt^n(\yt^n)& \text{if $\yt^n$ is covered by $\Cc$}, \\
                \yt^n& \text{if $\yt^n$ is not covered}.
            \end{cases}
        \end{equation}
        Note that $\Wt_n$ is a function of $\Yt^n$ and that the set of erasure coordinates in $\Wt_n$ is a subset of those in $\Yt^n$.
		
	\noindent{\emph{Channel Simulation Scheme:}}
        \begin{enumerate}
            \item The central node generates $\Wt_n$ defined in \eqref{eqn:wtn} and sends it to both encoders.
            \item Encoder 2 (Bob) generates $\Yt^n \sim p_{\Yt^n|\Wt_n}(\yt^n|\wt^n)$ 
            \item The central node generates and sends to both encoders a message $M$ comprising i.i.d. $\bern(1/2)$ bits for only those coordinates $i$ of $X^n$ where $\Wt_n(i)=d$. Thus $H(M)\le n(1-p_1+\delta(\e))$.
            \item Encoder 1 (Alice) generates the remaining bits of $X^n$ not conveyed by $M$ using local randomness. Then $X^n$ is independent of $\Wt_n,\Yt^n$ and is i.i.d. $\bern(1/2)$.
            \item Encoder 2 generates $Y^n=Y^n(\Wt_n,X^n)=Y^n(\Wt_n,M)$. He only needs the bits $X_i$ such that $\Yt_i=d$, which are available via $M$.
        \end{enumerate}

        To complete the proof, note that $X^n\to (\Wt_n,M)\to Y^n$ forms a Markov chain. Therefore,
        \begin{align*}
            G(X^n;Y^n) & \le H(\Wt_n,M)+1 \le H(\Wt_n)+H(M)+1\\
            &\stackrel{(a)}\le H(\delta(\e))+(1-\delta(\e))H(\Wt_n|\Wt_n\in \Cc)\\
	    &~~~+\delta(\e) H(\Wt_n|\Wt_n\notin \Cc)+n(1-p_1+\delta(\e))+1\\
            &\stackrel{(b)}{\le} H(\delta(\e))+(1-\delta(\e))\log |\Cc| \\
	    &~~~~~~~~~~~+ \delta(\e) \log |\tilde{\Yc}^n| + n(1-p_1+\delta(\e))+1\\
            &= n(I(\Yt;\Wt)+1-p_1+\delta(\e))\\
            &\stackrel{(c)}{=} n(I(W;X,Y)+\delta(\e)),
        \end{align*}
        where $(a)$ follows by the grouping lemma for entropy, since $\P\{\Wt_n\notin \Cc\}=\P\{\Yt^n\text{ not covered}\}=\delta(\e)$; $(b)$ follows since entropy is upper bounded by $\log$ of the alphabet size; and $(c)$ follows from the definition of mutual information and some algebraic manipulations.

        If we let $n\to \infty$, we obtain $\Gbar(X;Y)\le I(W;X,Y)+\delta(\e)$ for any $\e>0$. Minimizing $I(W;X,Y)$ over all $W$ from Lemma \ref{clm:sbesWchoice} completes the proof.
	\end{proof}
\smallskip

Note that the single letter characterization of the Wyner common information for the 2-DMS $(X^k,Y^k)\sim\prod_{i=1}^k p_{X,Y}(x_i,y_i)$ is $k$ times that of the 2-DMS $(X;Y)$, that is, $\min I(W;X^k,Y^k) = k \min I(W;X,Y)$. The same property holds for the G\'{a}cs--K\"{o}rner--Witsenhuesen common information~\cite{Gacs--Korner1973}, and for the mutual information. In the following we show that $G(X^k;Y^k)$ can be strictly smaller than $kG(X;Y)$. Hence, it is possible to realize gains in the ``common entropy" when we increase the dimension.

By the fact that for the SBEC, $\Gbar(X;Y)=H(p)$ for $p>1/2$ and $G(X;Y)= \min \{1, H(p)+1-p\}$, there exists a $p$ such that $\Gbar(X;Y)<G(X;Y)$. Hence, we can show by contradiction that there exists a 2-DMS $(X,Y)$ such that $G(X^2;Y^2)<2G(X;Y)$.
We can also give an explicit example of a 2-DMS $(X,Y)$ such that $G(X^2;Y^2)<2G(X;Y)$. Let
    \[
        p_{X,Y}=
        \begin{bmatrix}
            1/3 & 1/3 \\
            1/3 & 0
        \end{bmatrix}.
    \]
    Then, by Proposition~\ref{prop:2letter}, we have $G(X;Y)=H(1/3)$, where $H(p)$, $0 \le p \le 1$, is the binary entropy function. Now,
    \[
    p_{X^2,Y^2}=
    \begin{bmatrix}
        1/9 & 1/9 & 1/9 & 1/9 \\
        1/9 & 0 & 1/9 & 0 \\
        1/9 & 1/9 & 0 & 0 \\
        1/9 & 0 & 0 & 0
    \end{bmatrix}.
    \]
    If $X^2\to W\to Y^2$, we can write
    \begin{align*}
        p_{X^2,Y^2}(x^2,y^2)&=\sum_w p_W(w)p_{X^2|W}(x^2|w)p_{Y^2|W}(y^2|w).
        \end{align*}
        Therefore, if we write
        \begin{align*}
            p_{X^2,Y^2}&=
            \frac{4}{9}
            \begin{bmatrix} 1/4 \\1/4\\1/4\\1/4 \end{bmatrix}
            \begin{bmatrix} 1 \\0 \\0 \\0  \end{bmatrix}^t
            +\frac{3}{9}
            \begin{bmatrix} 1 \\0 \\0 \\0  \end{bmatrix}
            \begin{bmatrix} 0 \\1/3\\1/3\\1/3 \end{bmatrix}^t
        	+\frac{1}{9}
        	\begin{bmatrix} 0 \\1 \\0 \\0  \end{bmatrix}
            \begin{bmatrix} 0 \\0 \\1 \\0  \end{bmatrix}^t
            +\frac{1}{9}
            \begin{bmatrix} 0 \\0 \\1 \\0  \end{bmatrix}
            \begin{bmatrix} 0 \\1 \\0 \\0  \end{bmatrix}^t.
        \end{align*}
    we can identify $p_W(w)=\begin{bmatrix}4/9,3/9,1/9,1/9\end{bmatrix}$.
    It is easy to see $p_{W_1,W_2}(w_1,w_2)=\begin{bmatrix}4/9,2/9,2/9,1/9\end{bmatrix}$.
    Thus,
    \[
        G(X^2;Y^2)= H(W) < H(W_1,W_2)= 2H(W_1)=2G(X;Y).
    \]
	\subsection{Approximate common information rate}
Consider the approximate distributed generation setting in which Alice and Bob wish to generate 2-DMS $(X,Y)$ with vanishing total variation
	\begin{equation*}
		\lim_{n\to \infty} \big|p_{\hat X^n,\hat Y^n}(x^n,y^n) - \prod_{i=1}^n p_{X,Y}(x_i,y_i)\big|_{\mathrm{TV}} = 0.
	\end{equation*}
We define a $(W_n,R,n)$-simulation code for this setting in the same manner as for exact distributed generation.
We define the {\em approximate common information rate} $R_{\mathrm{TV}}^*$ between the sources $X$ and $Y$ as the infimum over all rates $R$ such that the 2-DMS $(X,Y)$ can be approximately generated.

We can show that the approximate common information rate is equal to the Wyner common information.
	\begin{proposition}\label{prop:wyner}
		\[
			R^*_{\mathrm{TV}} = J(X;Y).
		\]
    \end{proposition}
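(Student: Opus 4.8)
The plan is to prove the two inequalities $R^*_{\mathrm{TV}} \le J(X;Y)$ and $R^*_{\mathrm{TV}} \ge J(X;Y)$ separately, exploiting the fact that the approximate setting is a genuine relaxation of both the exact setting and of Wyner's original formulation.

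For the achievability direction $R^*_{\mathrm{TV}} \le J(X;Y)$, I would piggyback on Wyner's original result. Wyner~\cite{Wyner1975a} shows that for any $R > J(X;Y)$ there is a block code of blocklength $n$ and rate $R$ mapping a uniform index $M_n \in [1:2^{nR}]$ through stochastic maps $p_{\hat X^n | M_n}$ and $p_{\hat Y^n | M_n}$ (with $\hat X^n$ and $\hat Y^n$ conditionally independent given $M_n$) such that the joint distribution of $(\hat X^n, \hat Y^n)$ is close in total variation to the product distribution. The only gap is that our notion of rate is the minimum expected prefix-free codeword length $\E(L)$ for the common random variable $W_n = M_n$, not $\log_2$ of the alphabet size. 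But since $M_n$ is uniform on an alphabet of size $2^{nR}$, a fixed-length binary code of length $\lceil nR \rceil$ is prefix-free and zero-error, so $\E(L) \le nR + 1$; dividing by $n$ and letting $n \to \infty$ shows every rate exceeding $J(X;Y)$ is achievable, hence $R^*_{\mathrm{TV}} \le J(X;Y)$.

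For the converse $R^*_{\mathrm{TV}} \ge J(X;Y)$, suppose rate $R$ is achievable approximately, so for every $\epsilon > 0$ there is a $(W_n, R, n)$ simulation code with $p_{\hat X^n, \hat Y^n}$ within $\epsilon$ in total variation of $\prod p_{X,Y}$ for $n$ large. The standard Wyner converse argument gives $nR \ge H(W_n) \ge I(W_n; \hat X^n, \hat Y^n)$, and since $\hat X^n \to W_n \to \hat Y^n$ is a Markov chain, one lower-bounds $I(W_n; \hat X^n, \hat Y^n)$ by a single-letterized expression. Concretely, one shows $I(W_n; \hat X^n, \hat Y^n) \ge \sum_{i=1}^n I(W_n, \hat X^{i-1}, \hat Y^{i-1}; \hat X_i, \hat Y_i) \ge \sum_i I(\tilde W_i; \hat X_i, \hat Y_i)$ with $\tilde W_i = (W_n, \hat X^{i-1}, \hat Y^{i-1})$, noting that $\hat X_i \to \tilde W_i \to \hat Y_i$ holds; then a standard time-sharing/continuity argument, together with the fact that the marginal of $(\hat X_i, \hat Y_i)$ is $\epsilon_n$-close to $p_{X,Y}$ (so that $I(W; X, Y)$ for the induced $W$ is within $\delta(\epsilon_n)$ of a valid point in Wyner's minimization by continuity of mutual information on a bounded alphabet), yields $R \ge J(X;Y) - \delta(\epsilon_n)$. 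Letting $\epsilon_n \to 0$ finishes it. The main obstacle here is the one that always arises in soft-covering converses: the Markov chain $\hat X_i \to \tilde W_i \to \hat Y_i$ is exact, but the marginal $p_{\hat X_i, \hat Y_i}$ is only approximately $p_{X,Y}$, so one must carefully invoke continuity of $J(\cdot;\cdot)$ (equivalently, of the constrained mutual-information minimization) in the source distribution on a fixed finite alphabet — a step that is routine but must be stated rather than skipped, and which is exactly why the result is an equality only in the limit.
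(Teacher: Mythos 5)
Your achievability argument matches the paper's exactly: reuse Wyner's scheme, encode the uniform index $W_n$ with a fixed-length prefix-free code of length $\lceil nR\rceil$, and observe $\E(L)/n\to R$. Your converse follows the same skeleton as the paper's (chain rule, single-letterization with a time-sharing variable, continuity of the constrained minimization), but as written it contains a false inequality. The step
\[
I(W_n;\hat X^n,\hat Y^n)\;\ge\;\sum_{i=1}^n I(W_n,\hat X^{i-1},\hat Y^{i-1};\hat X_i,\hat Y_i)
\]
goes the wrong way: the chain rule gives the exact identity
\[
I(W_n;\hat X^n,\hat Y^n)\;=\;\sum_{i=1}^n\Big[ I(W_n,\hat X^{i-1},\hat Y^{i-1};\hat X_i,\hat Y_i)\;-\;I(\hat X^{i-1},\hat Y^{i-1};\hat X_i,\hat Y_i)\Big],
\]
so without further work you only obtain $I(W_n;\hat X^n,\hat Y^n)\le\sum_i I(\tilde W_i;\hat X_i,\hat Y_i)$, which is useless for a converse. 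The fix, which the paper makes explicit, is to keep the subtracted cross term and bound it using TV-closeness to the i.i.d.\ product: $\sum_i I(\hat X^{i-1},\hat Y^{i-1};\hat X_i,\hat Y_i)\le n\,\delta(\e)$ (Lemma 20 of Cuff's thesis). The same issue recurs after introducing the time-sharing variable $Q$, where one must also bound $I(\hat X_Q,\hat Y_Q;Q)\le\delta(\e)$ (Lemma 21 there) before invoking continuity of $J$. So TV-closeness is used to control two extra mutual-information terms, not just to approximate the marginal of $(\hat X_i,\hat Y_i)$. A minor stylistic difference: you carry the full past in $\tilde W_i=(W_n,\hat X^{i-1},\hat Y^{i-1})$, whereas the paper drops the history (lower-bounding by $I(\hat X_q,\hat Y_q;W)$) and uses $(W,Q)$ as the auxiliary; both work, since $\hat X_i\to\tilde W_i\to\hat Y_i$ and $\hat X_q\to W\to\hat Y_q$ both hold. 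Once you repair the inequality and insert the two $\delta(\e)$ bounds, your argument coincides with the paper's proof.
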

	\begin{proof}
		\noindent{\emph{Achievability:}} Achievability follows from Wyner's coding scheme~\cite{Wyner1975a}. Choose $W_n\sim \U [1:2^{nR}]$ and associate each $w_n\in\Wc_n$ with a codeword of fixed length $\ell(w_n)=\lceil nR \rceil$. Decoders 1 (Alice) and  2 (Bob) first decode $W_n$ and then use Wyner's coding scheme to generate $\hat{X}^n,\hat{Y}^n$, respectively. Any rate $R>J(X;Y)$ is admissible and will guarantee the existence of a scheme such that $(\hat{X}^n,\hat{Y}^n)$ is close in total variation to $(X^n,Y^n)$. Thus $R^*_{\mathrm{TV}} \le J(X;Y)$.
		
		\noindent{\emph{Converse:}} Suppose that for any $\e>0$, there exists a $(W_n,R,n)$ simulation code that generates $(\hat X^n,\hat Y^n)$ whose pmf differs from that of $(X^n,Y^n)$ by at most $\e$ in total variation. Then we have
		\begin{align*}
			nR &\ge H(W_n) \ge I(\hat X^n,\hat Y^n;W_n)\\
            &= \sum_{q=1}^n I(\hat X_q,\hat Y_q;W|\hat X^{q-1},\hat Y^{q-1})\\
            &= \sum_{q=1}^n I(\hat X_q,\hat Y_q;W,\hat X^{q-1},\hat Y^{q-1})\\
	    &~~~~~~~~~~~~~~~-I(\hat X_q,\hat Y_q;\hat X^{q-1},\hat Y^{q-1})\\
            &\stackrel{(a)}{\ge} \sum_{q=1}^n I(\hat X_q,\hat Y_q;W) - n\delta(\e)\\
            &=nI(\hat X_Q,\hat Y_Q;W|Q)- n\delta(\e)\\
            &=nI(\hat X_Q,\hat Y_Q;W,Q)- nI(\hat X_Q,\hat Y_Q;Q)- n\delta(\e)\\
            &\stackrel{(b)}{\ge} nI(\hat X_Q,\hat Y_Q;W,Q) - n\delta(\e)\\
            &\stackrel{(c)}{\ge} nJ(X;Y) - n\delta(\e).
		\end{align*}
		$(a),(b)$ follow from Lemma 20 and Lemma 21 respectively in~\cite{cuffthesis} since the pmf of $(\hat X^n,\hat Y^n)$ differs from that of $(X^n,Y^n)$ by at most $\e$ in total variation;
		and $(c)$ follows from the continuity of $J(X;Y)$ ~\cite{Wyner1975a}.	\end{proof}
\smallskip
	
	\noindent{\bf Remark}: Note that if we replace the total variation constraint in Proposition \ref{prop:wyner} by the stronger condition
	\begin{equation}\label{eqn:sepdist}
		p_{X^n,Y^n}(x^n,y^n)=(1-\e)p_{\hat X^n,\hat Y^n}(x^n,y^n)+\e r(x^n,y^n)
	\end{equation}
	for some pmf $r(x^n,y^n)$ over $\Xc^n\times \Yc^n$, the required approximate common information rate $R^*_{\mathrm{SD}}$ becomes equal to the exact common information $\Gbar(X;Y)$.
	To show this, note that
	$R^*_{SD}\le \Gbar(X;Y)$ is trivial because the exact distributed generation constraint is stronger than \eqref{eqn:sepdist}.
	
	 To show $R^*_{\mathrm{SD}}\ge \Gbar(X;Y)$, start with any $(W_n,R,n)$ simulation code that generates $(\hat X^n,\hat Y^n)$ satisfying \eqref{eqn:sepdist}. Let
	 \[
		W_n'=\begin{cases}
			W_n&\text{w.p. }1-\e,\\
			(\bar X^n,\bar Y^n)\sim r(x^n,y^n) &\text{w.p. } \e.
		\end{cases}
	 \]
	 We construct a $(W_n',R',n)$ code that generates $(X^n,Y^n)$ exactly and satisfies $R'\le R+\delta(\e)$. If the decoders receive $W_n'=W_n$, they follow the original achievability scheme to generate $(\hat X^n,\hat Y^n)$ satisfying \eqref{eqn:sepdist}. If $W_n'=(\bar X^n,\bar Y^n)$, then the decoders simply output $\bar X^n$ and $\bar Y^n$, respectively. Now,
	\begin{align*}
		H(W_n') &\le H(\e)+ (1-\e)H(W_n)+\e \log |\Xc|^n|\Yc|^n \\
			&=H(W_n)+n\delta(\e).
	\end{align*}
	Therefore, $R'\le (1/n)(H(W_n')+1)=R+\delta(\e)+1/n =R+\delta(\e)$ for $n$ large enough. Thus $R^*_{SD}\ge \Gbar(X;Y)$.
	
    \section{Exact coordination capacity}\label{sec:cuff}
    In this section, we consider exact channel simulation, an extension of channel simulation with total variation constraint introduced in~\cite{cuff2}. Consider the setup shown in Figure \ref{fig:cuff}. Nature generates $X^n\sim \prod_{i=1}^n p_{X}(x_i)$ that is available to the encoder. Both encoder and decoder have access to common randomness $W_n$. The encoder sends a message $M(X^n,W)$ to the decoder. The decoder outputs $\hat Y^n$ using the message $M$, the common randomness $W_n$, and local randomness. We wish to characterize the trade-off between the amount of common randomness and the information rate.

    \begin{figure}[h]
    	\begin{center}
     	\psfrag{w}[b]{$W_n$}
        \psfrag{m}[b]{$M$}
     	\psfrag{a}[c]{Encoder}	
    	\psfrag{b}[c]{Decoder}
    	\psfrag{u}[t]{$X^n$}
    	\psfrag{v}[t]{\,$\hat Y^n$}
    	\includegraphics[scale=0.55]{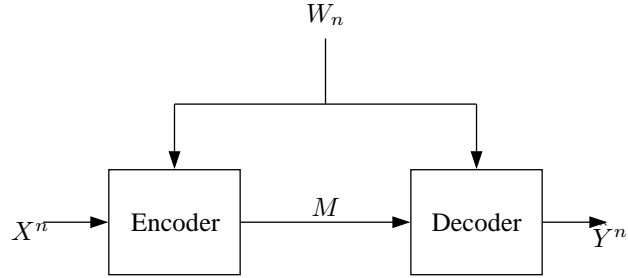}
	   \vspace{5pt}
        \caption{Setting for exact channel simulation.}
        \label{fig:cuff}
    	\end{center}
    \end{figure}

    Formally, a $(W_n,R,R_0,n)$ channel simulation code for this setup consists of
    \begin{itemize}
        \item a common random variable $W_n \sim p_{W_n}(w)$ independent of the source $X^n$. As a measure of the amount of common randomness, we use the per-letter {\em minimum expected codeword length} $R_0$ over the set of all variable length {\em prefix-free} zero-error binary codes $\Cc_0 \subset \{0,1\}^*$ for $W_n$, i.e., $R_0= \min_{\Cc_0} \E(L)$, where $L$ is the codeword length of the code $\Cc_0$ for $W_n$,
        \item an encoding function $M(X^n,W_n)$ that maps $(X^n,W_n)$ into a random variable $M$. As a measure of the information rate, we use the per-letter {\em minimum expected codeword length} $R$ over the set of all variable length {\em prefix-free} zero-error binary codes $\Cc \subset \{0,1\}^*$ for $M$, i.e., $R= \min_{\Cc} \E(L')$, where $L'$ is the codeword length of the code $\Cc$ for $M$, and
        \item a stochastic decoder $p_{\hat Y^n|M,W_n}(y^n|m,w)$ that outputs $\hat Y^n$.
    \end{itemize}

    The channel simulation code is said to simulate the DMC $p_{Y|X}(y|x)$ exactly if $p_{\hat Y^n|X^n}(y^n|x^n)=\prod_{i=1}^np_{Y|X}(y_i|x_i)$.

    We wish to characterize the set of all achievable rates $(R,R_0)$ for which the DMC $p_{Y|X}(y|x)$ can be simulated exactly.

     We do not know the rate region for exact simulation of an arbitrary DMC. In the following, we show that for the erasure channel, it is equal to the rate region under total variation in~\cite{channelsynth}.

    \begin{theorem}
        When $X$ is a binary symmetric source and $p_{Y|X}(y|x)$ is a binary erasure channel with erasure probability $p$, the rate region for exact channel simulation is the set of rate pairs $(R,R_0)$ such that
        \begin{equation}\label{eqn:cuff1}
            \begin{aligned}
                R&\ge r, \\
                R+R_0&\ge H(p) + r\big(1-H((1-p)/r)\big),
            \end{aligned}
        \end{equation}
        for some $1-p\le r \le \min\{2(1-p),1\}$.
    \end{theorem}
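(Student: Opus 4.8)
The plan is to prove the two inclusions separately. The outer bound ``$\subseteq$'' is essentially inherited from the total-variation setting of~\cite{channelsynth}, so the real content is the achievability ``$\supseteq$'', which I would obtain by reinterpreting the coding scheme in the proof of Theorem~\ref{thm:SBES}.

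\emph{Converse.} An exact channel simulation code is a fortiori a channel simulation code with zero (hence vanishing) total variation, so every rate pair $(R,R_0)$ achievable here is achievable for the total-variation problem of~\cite{channelsynth}. Moreover the outer bound there is really a bound on normalized entropies: it controls $(1/n)H(M)$ and $(1/n)H(M,W_n)$, and since the minimum expected prefix-free codeword length of any random variable is at least its entropy, we have $R\ge(1/n)H(M)$ and $R+R_0\ge(1/n)H(M,W_n)$, so the bound applies to our rate measures unchanged. For the binary symmetric source and binary erasure channel, \cite{channelsynth} identifies that outer bound with the region~\eqref{eqn:cuff1}; hence the exact rate region is contained in~\eqref{eqn:cuff1}.

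\emph{Achievability.} Fix $r$ with $1-p\le r\le\min\{2(1-p),1\}$, put $p_1=1-r$ (so $p_1\in[\max\{2p-1,0\},\,p]$) and $p_2=(p-p_1)/(1-p_1)$, so that $p_1+p_2-p_1p_2=p$, and let $W,\Wt,\Yt$ be exactly as in the proof of Theorem~\ref{thm:SBES} for these parameters, with $W$ the partial-erasure auxiliary of Lemma~\ref{clm:sbesWchoice}: $\Yt^n$ (the erasure locations) is \iid $\bern(p)$ and independent of $X^n$, $Y^n$ is a deterministic function of $X^n$ and $\Yt^n$, and $\Wt$ marks the subset of erasures ``caused by the first stage.'' Reuse the covering codebook $\Cc$ of $\approx 2^{n(I(\Yt;\Wt)+\e)}$ sequences and the random variable $\Wt_n$ of~\eqref{eqn:wtn}, recalling that $\Wt_n$ is a function of $\Yt^n$ whose erasure set is contained in that of $\Yt^n$. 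The code: let the common randomness $W_n$ be $\Wt_n$ --- a legitimate choice since $\Wt_n$ is built from $\Yt^n$ alone and is therefore independent of the source $X^n$ --- and let the message $M$ be the list of the source bits $X_i$ at the coordinates $i$ with $\Wt_n(i)=d$, prefixed by their count. The decoder, holding $W_n=\Wt_n$ and $M$, regenerates $\Yt^n\sim p_{\Yt^n\cond\Wt_n}$ using local randomness, outputs $e$ wherever $\Yt_i=e$, and outputs $X_i$ wherever $\Yt_i=d$ --- which it can, since the $d$-coordinates of $\Yt^n$ are a subset of those of $\Wt_n$ and so appear in $M$. Because $\Yt^n$ is thereby \iid $\bern(p)$ and independent of $X^n$, the output satisfies $p_{\hat Y^n\cond X^n}=\prod_i p_{Y\cond X}$ \emph{exactly}.

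\emph{Rates, union, and the obstacle.} A prefix-free code for $W_n=\Wt_n$ has expected length at most $H(\Wt_n)+1$, which by the grouping-lemma bound in steps~$(a)$--$(b)$ of the proof of Theorem~\ref{thm:SBES} is at most $n(I(\Yt;\Wt)+\delta(\e))+1$; likewise $H(M)\le n(1-p_1+\delta(\e))$ by counting first-stage survivors. So $R_0\le I(\Yt;\Wt)+\delta(\e)$ and $R\le 1-p_1+\delta(\e)$. Using $I(\Yt;\Wt)=H(p)-(1-p_1)H(p_2)$ and $I(X;W)=1-p_1$ from Theorem~\ref{thm:SBES}, the identity $I(X,Y;W)=I(X;W)+I(Y;W\cond X)=I(X;W)+I(\Yt;\Wt)$, and $1-p_2=(1-p)/r$, one gets $R+R_0\le H(p)+r(1-H((1-p)/r))+\delta(\e)$ with $R\le r+\delta(\e)$. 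Letting $n\to\infty$ and then $\e\to0$, and noting that any pair dominating an achievable one is achievable by padding, the scheme achieves $\{(R,R_0):R\ge r,\ R+R_0\ge H(p)+r(1-H((1-p)/r))\}$; the union over $1-p\le r\le\min\{2(1-p),1\}$ is exactly~\eqref{eqn:cuff1}, which together with the converse proves the claim. The step doing the real work --- and the reason this does not extend to an arbitrary DMC --- is the exactness of the achievability: it works only because the erasure channel's output is a deterministic function of the input and of an erasure pattern that can be synthesized cleanly and independently of the input, so that (as in Theorem~\ref{thm:SBES}) the soft-covering step affects only the description \emph{rate} of $\Wt_n$, via the rare ``uncovered'' branch of~\eqref{eqn:wtn}, and never perturbs the simulated conditional law. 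A minor secondary point, noted above, is transferring the fixed-length converse of~\cite{channelsynth} to our expected-length rate measures.
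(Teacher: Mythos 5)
Your converse and corner-point achievability match the paper's argument. The converse is inherited from the total-variation outer bound of~\cite{channelsynth}, and your observation that the bound really controls normalized entropies --- so it transfers to the expected-prefix-free-length rate measure, since expected prefix-free codeword length is at least the entropy --- is the right justification for a step the paper leaves implicit. The achievability scheme, reusing $\Wt_n$ and the covering codebook from Theorem~\ref{thm:SBES} with the message $M$ generated at the encoder rather than the central node, correctly reaches the corner point $\bigl(r,\,H(p)-rH((1-p)/r)\bigr)$ for each admissible $r$.

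The gap is the final step. Write $f(r)=H(p)-rH((1-p)/r)$ and $g(r)=r+f(r)$. Padding a corner $(r,f(r))$ gives only the orthant $\{R\ge r,\ R_0\ge f(r)\}$, not the wedge $\{R\ge r,\ R+R_0\ge g(r)\}$ the theorem asserts for that $r$; the wedge contains points with $R>r$ and $R_0<f(r)$ that padding cannot reach. Taking the union over $r$ does not repair this: both $f$ and $g$ are nonincreasing on $[1-p,\,r_{\max}]$ with $r_{\max}=\min\{2(1-p),1\}$ (a short computation gives $g'(r)=1+\log_2\bigl(1-(1-p)/r\bigr)\le 0$ on this interval), so for $R>r_{\max}$ your union of orthants has lower boundary $R_0=f(r_{\max})$, whereas the theorem's region extends down to $R_0=\max\{0,\ g(r_{\max})-R\}$. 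For $p>1/2$ one has $f(r_{\max})=H(p)-2(1-p)>0$, so for every $R>2(1-p)$ a strip of claimed rate pairs is left unachieved by your argument. The missing idea is the one the paper states explicitly at the end of its proof: the encoder (Alice) also holds $\Wt_n$, so an arbitrary portion of the common randomness can be forwarded to the decoder through the message $M$ rather than charged to $R_0$. This trades $R_0$ for $R$ along the line $R+R_0=g(r)$ and, together with padding, fills the full wedge for each $r$. Adding that observation completes the achievability.
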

    \begin{proof}
        The converse holds for total variation constraint and should therefore trivially hold for the more stringent exact channel simulation constraint.
		The achievability proof closely resembles that of Theorem \ref{thm:SBES}. The central node still generates $\Wt_n$, but the message $M$, instead of being generated at the central node, is now generated by the encoder (Alice). Thus, a rate pair $(R,R_0)$ is achievable if
        \begin{equation}\label{eqn:cuff2}
            \begin{aligned}
                R   &= (1/n)(H(M)+1)+\delta(\e)=1-p_1+\delta(\e), \\
                R_0 &= (1/n) (H(\Wt_n)+1)+\delta(\e) = I(\Yt;\Wt)+\delta(\e) = H(p)-(1-p_1)H(p_2)+\delta(\e).
            \end{aligned}
        \end{equation}
        for some $p_1$ and $p_2$ such that $p=p_1+p_2-p_1p_2$.
        Letting $r=1-p_1$  shows that $R=r,R_0=H(p)-rH((1-p)/r)$ is achievable. Finally note that both the encoder (Alice) and the central node can share the responsibility of generating and sending $\Wt_n$ to the decoder (Bob). Thus an arbitrary fraction of $R_0$ can be removed and instead added to $R$. This shows the equivalence of \eqref{eqn:cuff1} and \eqref{eqn:cuff2}.

    \end{proof}

    \section{Computing $G(X;Y)$}\label{sec:computingG}
    The optimization problem for determining $G(X;Y)$ is in general quite difficult, involving the minimization of a concave function over a complex Markovity constraint. In this section we provide some results on this optimization problem. We provide two bounds on the cardinality of $W$, establish two useful extremal lemmas, and use these results to analytically compute $G(X;Y)$ for binary alphabets. We then briefly discuss a connection to a problem in machine learning.

    We first establish the following upper bound on cardinality.
    \begin{proposition}\label{prop:cardbnd}
        To compute $G(X;Y)$ for a given pmf $p_{X,Y}(x,y)$, it suffices to consider $W$ with cardinality $|\Wc|\le |\Xc||\Yc|$.
    \end{proposition}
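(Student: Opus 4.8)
The plan is to use the standard support-cardinality reduction argument based on the convex-cover / Fenchel–Eggleston–Carathéodory theorem, specialized to the quantity $G(X;Y)=\min_{X\to W\to Y}H(W)$. Fix an optimal $W$ with pmf $p_W$ and conditional kernels $p_{X|W}$, $p_{Y|W}$ achieving $G(X;Y)$. For each value $w$ let $v_w=(p_{X|W}(\cdot|w),\,p_{Y|W}(\cdot|w))$ be the associated pair of conditional distributions, viewed as a point in the simplex $\mathcal P(\Xc)\times\mathcal P(\Yc)$. The joint pmf $p_{X,Y}$ is then the average $\sum_w p_W(w)\,p_{X|W}(x|w)p_{Y|W}(y|w)$, and we want to re-express $p_{X,Y}$ as a mixture of a small number of product points $v_w$ while simultaneously not increasing $H(W)$.

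The key step is to set up the right collection of functionals to preserve. First I would note that the constraint that $X\to W\to Y$ reconstructs $p_{X,Y}$ exactly amounts to preserving, for each $(x,y)$, the value $\sum_w p_W(w)\,p_{X|W}(x|w)p_{Y|W}(y|w)$; this is $|\Xc||\Yc|$ linear functionals of $p_W$ (bilinear in the kernels but linear in the mixing weights once the kernels at each atom are fixed), and since the entries of $p_{X,Y}$ sum to $1$ only $|\Xc||\Yc|-1$ of them are independent, plus the normalization $\sum_w p_W(w)=1$. Then I would add one more functional, $H(W)$ itself, which is \emph{concave} in $p_W$. Applying the support-reduction lemma (Fenchel–Eggleston–Carathéodory, as in Appendix C of \cite{abbasbook}): given a distribution on the point set $\{v_w\}$ realizing prescribed values of $|\Xc||\Yc|-1$ continuous functionals and with a given value of one further continuous (here concave) functional, there is a distribution supported on at most $|\Xc||\Yc|$ of the points realizing the same values of all the functionals, in particular the same $H(W)$ (or, more carefully, a value of $H(W)$ not larger, which is all we need). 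The new $W'$ has $|\Wc'|\le |\Xc||\Yc|$, still satisfies $X\to W'\to Y$ with the same joint law $p_{X,Y}$, and has $H(W')\le H(W)=G(X;Y)$; by optimality of $G$ it achieves $G(X;Y)$, so the minimum in \eqref{eqn:defineH} is attained with $|\Wc|\le|\Xc||\Yc|$.

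The main obstacle — and the one point that needs care — is that the reconstruction constraint is \emph{not} linear in the kernels $p_{X|W},p_{Y|W}$ jointly, so one cannot naively apply the convex-cover lemma to the pairs $(p_{X|W}(\cdot|w),p_{Y|W}(\cdot|w))$ as free variables. The clean way around this is to fix the conditional kernels at the atoms that survive and only thin out the \emph{mixing weights} $p_W$: once the map $w\mapsto v_w$ is fixed, the target $p_{X,Y}(x,y)=\sum_w p_W(w)\,[\,p_{X|W}(x|w)p_{Y|W}(y|w)\,]$ is genuinely linear in $p_W$, so the Carathéodory-type argument on the finite point set $\{(p_{X|W}(x|w)p_{Y|W}(y|w))_{x,y}\}_w\subset\Real^{|\Xc||\Yc|}$ applies directly, with $H(W)$ carried along as the extra concave coordinate. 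I would also remark that this bound is not claimed to be tight (the sharper bound presumably comes from the extremal lemmas mentioned later in Section~\ref{sec:computingG}), so for this proposition it suffices to run the generic argument and conclude.
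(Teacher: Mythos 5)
Your overall strategy is the right one and matches the paper's in its essential structure: fix the two conditional kernels $p_{X|W}(\cdot|w)$ and $p_{Y|W}(\cdot|w)$ at each atom $w$, so that the reconstruction constraints $\sum_w p_W(w)\,p_{X|W}(x|w)\,p_{Y|W}(y|w) = p_{X,Y}(x,y)$ become \emph{linear} in the mixing distribution $p_W$, and then reduce the support of $p_W$ subject to these $|\Xc||\Yc|$ linear equalities. Your observation that one must work at the level of the mixing weights (rather than the kernel pairs jointly) to get linearity is exactly the point that makes the argument go through, and your count of constraints agrees with the paper's.

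The gap is in the tool you invoke. The Fenchel--Eggleston--Carath\'eodory support lemma (the ``convex cover'' half of Appendix~C of \cite{abbasbook}) preserves \emph{expectations of fixed functions}, i.e., functionals of the form $\sum_w p_W(w)\,g(v_w)$ where $g$ does not depend on $p_W$. The entropy $H(W)=-\sum_w p_W(w)\log p_W(w)$ is not of that form: the ``integrand'' $-\log p_W(w)$ depends on $p_W$ itself. So you cannot treat $H(W)$ as an extra coordinate to ``carry along'' in a Carath\'eodory reduction, and the parenthetical hedge ``a value of $H(W)$ not larger, which is all we need'' is not supplied by that lemma. The step that actually finishes the argument is the one the paper uses --- the \emph{perturbation} half of Appendix~C: the feasible $p_W$'s (with kernels fixed) form a polytope cut out by the $|\Xc||\Yc|$ linear constraints, $H$ is concave, hence its minimum over that polytope is attained at a vertex, and a vertex has at most $|\Xc||\Yc|$ nonzero coordinates. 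Equivalently, as in the paper, one perturbs $p_W \mapsto p_W(\cdot)(1+\epsilon\phi(\cdot))$ inside the constraint null space; since $H$ is concave in $\epsilon$, pushing $\epsilon$ to an extremal value zeroes a coordinate without increasing $H(W)$. You already flag that $H(W)$ is concave in $p_W$, so the missing step is short --- replace the Carath\'eodory invocation with this concavity/vertex (or perturbation) argument and the proof coincides with the paper's.
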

    The proof of this proposition is in Appendix~\ref{proof:cardinality}.

    We now state an extremal lemma regarding the optimization problem for $G(X;Y)$ that will naturally lead to another cardinality bound.

    \begin{lemma}\label{lem:distinctsupport}
        Given $p_{X,Y}(x,y)$, let $W$ attain $G(X;Y)$. Then for $w_1\neq w_2$, the supports of $p_{Y|W}(\cdot|w_1)$ and $p_{Y|W}(\cdot|w_2)$ must be different.
    \end{lemma}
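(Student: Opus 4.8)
The plan is to argue by contradiction: suppose $W$ attains $G(X;Y)$ but there exist $w_1 \ne w_2$ with $\supp p_{Y|W}(\cdot|w_1) = \supp p_{Y|W}(\cdot|w_2)$. I would then construct a new random variable $W'$ with strictly smaller entropy that still satisfies the Markov chain $X \to W' \to Y$, contradicting optimality of $W$. The natural construction is to \emph{merge} $w_1$ and $w_2$ into a single symbol $w_0$, setting $p_{W'}(w_0) = p_W(w_1) + p_W(w_2)$, keeping all other symbols unchanged, and defining the output channel from $w_0$ as the appropriate convex combination $p_{Y|W'}(\cdot|w_0) = \lambda\, p_{Y|W}(\cdot|w_1) + (1-\lambda)\, p_{Y|W}(\cdot|w_2)$ with $\lambda = p_W(w_1)/(p_W(w_1)+p_W(w_2))$. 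Merging two symbols always weakly decreases entropy (by concavity / the grouping property), and it is strict whenever both $p_W(w_1), p_W(w_2) > 0$, which we may assume since symbols of zero probability can be discarded. So the entropy drops strictly; the remaining issue is whether the merged $W'$ still induces the correct joint law $p_{X,Y}$.

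The delicate point — and what I expect to be the main obstacle — is the $X$-side of the channel: merging $w_1$ and $w_2$ is harmless for the $Y$-marginal and for the $W'\to Y$ channel by construction, but we also need a conditional law $p_{X|W'}(\cdot|w_0)$ that, together with the merged weights, reproduces $p_{X,Y}$. Here is where the equal-support hypothesis is used. Because $p_{Y|W}(\cdot|w_1)$ and $p_{Y|W}(\cdot|w_2)$ have the same support, say $S \subseteq \Yc$, consider any $y \in S$. The pair $(w_i, y)$ contributes $p_W(w_i) p_{X|W}(x|w_i) p_{Y|W}(y|w_i)$ to $p_{X,Y}(x,y)$ for $i=1,2$. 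I would define $p_{X|W'}(x|w_0)$ so that, for each $y \in S$, the $w_0$-term $p_{W'}(w_0) p_{X|W'}(x|w_0) p_{Y|W'}(y|w_0)$ equals the sum of the two original terms. Writing $\alpha_i = p_W(w_i) p_{Y|W}(y|w_i) / \big(p_{W'}(w_0) p_{Y|W'}(y|w_0)\big)$, one checks $\alpha_1 + \alpha_2 = 1$, so the required conditional is the mixture $p_{X|W'}(x|w_0) = \alpha_1 p_{X|W}(x|w_1) + \alpha_2 p_{X|W}(x|w_2)$ — \emph{but} the weights $\alpha_i$ a priori depend on $y$, whereas $p_{X|W'}(\cdot|w_0)$ must be a single distribution. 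This is precisely the consistency condition that needs checking: one must verify that the mixture weight $\alpha_i$ is in fact independent of $y \in S$. A short computation shows $\alpha_1/\alpha_2 = \big(p_W(w_1) p_{Y|W}(y|w_1)\big) / \big(p_W(w_2) p_{Y|W}(y|w_2)\big)$, so independence of $y$ is equivalent to $p_{Y|W}(\cdot|w_1)$ and $p_{Y|W}(\cdot|w_2)$ being scalar multiples of each other on $S$ — which, since both are probability vectors supported exactly on $S$, forces them to be \emph{equal}. So the equal-support hypothesis is not quite enough; I would instead strengthen the merge construction to handle the general equal-support case more carefully, or — more cleanly — observe that the statement as phrased should be read as: distinct optimal-channel rows must have distinct supports, and the proof only needs that if two rows are equal we can merge (the strictly-smaller-support case being what actually arises), reducing to the argument above with $p_{Y|W}(\cdot|w_1) = p_{Y|W}(\cdot|w_2)$ where the $\alpha_i$ are manifestly $y$-free.

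Concretely, then, the step order is: (1) WLOG all $w$ have positive probability and, toward a contradiction, assume two rows share a support $S$; (2) reduce to the case where the two rows are actually equal, or handle the general case by first showing any two equal-support rows can be taken equal after an affine adjustment within the optimization; (3) merge $w_1, w_2$ into $w_0$ with summed weight, mixed $X$-channel (weights $\lambda$, $1-\lambda$), and common $Y$-channel row; (4) verify $W'$ reproduces $p_{X,Y}$ and satisfies $X \to W' \to Y$; (5) invoke strict concavity of entropy under merging of two positive-mass symbols to get $H(W') < H(W)$, contradicting $H(W) = G(X;Y)$. The cardinality corollary then follows since distinct supports among subsets of $\Yc$ number at most $2^{|\Yc|} - 1$, and combined with Proposition~\ref{prop:cardbnd} this sharpens the alphabet bound. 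The main work, as flagged, is the bookkeeping in step (2)–(4) ensuring the single distribution $p_{X|W'}(\cdot|w_0)$ is consistent across all $y$.
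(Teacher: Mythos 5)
You correctly identify the central obstruction to the merge construction — the mixture weights $\alpha_i$ depend on $y$ unless the two conditional rows $p_{Y|W}(\cdot|w_1)$ and $p_{Y|W}(\cdot|w_2)$ are not merely equal-support but actually \emph{equal}. However, your step~(2), which is supposed to resolve this, is not a proof: ``handle the general case by first showing any two equal-support rows can be taken equal after an affine adjustment within the optimization'' is precisely the content that needs an argument, and merging genuinely does not work when the two rows have the same support but are distinct distributions, because no single $p_{X|W'}(\cdot|w_0)$ reproduces the joint law. As written, the proposal reduces a lemma to an unproved sublemma that is of comparable difficulty, so there is a real gap.

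The paper's proof goes the opposite way: instead of contracting $w_1,w_2$ into one symbol, it \emph{spreads them apart}. View $p_{Y|W}(\cdot|w_1)$ and $p_{Y|W}(\cdot|w_2)$ as points in the probability simplex over $\Yc$. The equal-support hypothesis is used exactly once, and for one purpose: since both points lie in the relative interior of the same face of the simplex, the line segment joining them can be extended in \emph{both} directions by some positive amounts $a,b>0$ without leaving the simplex. Call the new endpoints $p_{Y|W'}(\cdot|w_1)$ and $p_{Y|W'}(\cdot|w_2)$ and leave all other rows unchanged. Then each original row is a convex combination of the new rows, so one obtains a channel $p_{W'|W}$ that acts as the identity off $\{w_1,w_2\}$ and mixes on $\{w_1,w_2\}$; this gives a Markov chain $X\to W\to W'\to Y$ with $|\Wc'|=|\Wc|$ (no cardinality reduction at all — that is handled separately by pigeonhole in Proposition~\ref{prop:CardBound2}). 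A direct computation gives $H(W)-H(W')=\P\{W\in\{w_1,w_2\}\}\bigl[H(p)-H\bigl((b+p)/(a+b+1)\bigr)\bigr]$ with $p=\P\{W=w_1\,|\,W\in\{w_1,w_2\}\}$, and since $(b+p)/(a+b+1)$ equals $p$ at $a=b=0$ and moves monotonically in $a$ and in $b$, a small perturbation in $a$ or $b$ makes the bracket strictly positive, contradicting optimality. Note this does \emph{not} require the extremal $W'$ to have equal rows or reduced support — it only needs some admissible perturbation to strictly lower entropy, which concavity of $H$ plus the existence of a two-sided extension guarantees. That is the idea your step~(2) is missing: replace ``merge'' by ``extend and reweight,'' so that equal support (rather than equal rows) is exactly the hypothesis needed.
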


    The proof of this lemma is in Appendix~\ref{proof:distinctsupport}.

    Lemma~\ref{lem:distinctsupport} yields the following bound on the cardinality of $W$.
    \begin{proposition}\label{prop:CardBound2}
        To compute $G(X;Y)$ for a given pmf $p_{XY}(x,y)$, it suffices to consider $W$ with cardinality $|\Wc|\le 2^{\min(|\Xc|,|\Yc|)}-1$.
    \end{proposition}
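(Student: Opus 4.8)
The plan is to deduce this directly from Lemma~\ref{lem:distinctsupport} together with the symmetry of the quantity $G(X;Y)$ in its two arguments. First I would observe that the Markov condition $X \to W \to Y$ is equivalent to $Y \to W \to X$, so the feasible sets in the two optimizations coincide and $G(X;Y) = G(Y;X)$. In particular, any $W$ that attains $G(X;Y)$ also attains $G(Y;X)$.

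Next I would fix a $W$ attaining $G(X;Y)$; such a $W$ exists by Proposition~\ref{prop:cardbnd}, and we may assume without loss of generality that $p_W(w) > 0$ for every $w \in \Wc$, since deleting unused symbols changes neither $H(W)$ nor feasibility of the Markov chain. Applying Lemma~\ref{lem:distinctsupport} to this $W$, the map $w \mapsto \operatorname{supp} p_{Y|W}(\cdot\,|w)$ is injective on $\Wc$. Each image is a nonempty subset of $\Yc$, and there are exactly $2^{|\Yc|}-1$ nonempty subsets of $\Yc$, so $|\Wc| \le 2^{|\Yc|}-1$.

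Then I would run the identical argument with the roles of $X$ and $Y$ exchanged: since the same $W$ attains $G(Y;X)$, Lemma~\ref{lem:distinctsupport} (applied to the pmf $p_{Y,X}$) shows that $w \mapsto \operatorname{supp} p_{X|W}(\cdot\,|w)$ is also injective on $\Wc$, giving $|\Wc| \le 2^{|\Xc|}-1$. Combining the two bounds yields
\[
|\Wc| \le \min\{2^{|\Xc|}-1,\ 2^{|\Yc|}-1\} = 2^{\min(|\Xc|,|\Yc|)}-1,
\]
which is the claim.

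There is no real obstacle here beyond bookkeeping: the only points that need care are the reduction to a $W$ with full support (so that "the support of $p_{Y|W}(\cdot|w)$" is well defined and nonempty for each retained symbol), and the explicit invocation of the $X\!\leftrightarrow\!Y$ symmetry so that a single optimal $W$ may be fed into Lemma~\ref{lem:distinctsupport} in both orientations. All the substantive work is already contained in Lemma~\ref{lem:distinctsupport}.
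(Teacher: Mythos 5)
Your proof is correct and takes essentially the same route as the paper: both rest on Lemma~\ref{lem:distinctsupport} plus the pigeonhole/injectivity observation that there are only $2^{|\Yc|}-1$ nonempty subsets of $\Yc$, and both obtain the other bound by symmetry. You are a bit more explicit than the paper in two small respects (citing Proposition~\ref{prop:cardbnd} for existence of an optimal $W$, and spelling out that the Markov chain is symmetric in $X$ and $Y$ so the same $W$ serves in both orientations), but these are presentational refinements rather than a different argument.
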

    \begin{proof}
        Suppose $|\Wc|>2^{|\Yc|}-1$. Since there are only $2^{|\Yc|}-1$ non-empty subsets of $\Yc$, by pigeon-hole principle, there exists $w_1\neq w_2$ such that the supports of $p_{Y|W}(\cdot|w_1)$ and $p_{Y|W}(\cdot|w_2)$ are the same. This contradicts Lemma \ref{lem:distinctsupport}. Hence $|\Wc|\le 2^{|\Yc|}-1$. By a symmetric argument, $|\Wc|\le 2^{|\Xc|}-1$.
    \end{proof}
    The following shows that the bound in Proposition \ref{prop:CardBound2} is tight.
\smallskip

    \noindent{\bf Example 2}
    Let $(X,Y)$ be a SBES with $p=0.1$. Since $p_{X,Y}(0,1)=p_{X,Y}(1,0)=0$, the Markovity constraint $X\to W\to Y$ implies that the only $W$ with $|\Wc|=2$ is $W=X$; see \cite{channelsynth}, Appendix A. Hence, $G(X;Y)\le H(X)=1$. However, $H(Y)=H(0.1)+0.1<1$. Thus, the optimal $W^*$ that achieves $G(X;Y)$ requires $|\Wc^*|=3$, making the bound in Proposition~\ref{prop:CardBound2} tight.

    The following is another extremal property of $G(X;Y)$.
    \begin{proposition}\label{prop:extremal}
        Suppose $W$ attains $G(X;Y)$.
        Consider a non-empty subset $\Wc'\subseteq \Wc$. Let $(X',Y')$ be defined by the joint pmf
        \[
            p_{X',Y'}(x,y) = \sum_{w\in \Wc'} \frac{p_W(w)}{\sum_{w'\in \Wc'}p_W(w')} p_{X|W}(x|w)p_{Y|W}(y|w).
        \]
        Then $H(X';Y')=H(W|W\in \Wc')$.
    \end{proposition}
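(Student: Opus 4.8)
The plan is to prove $G(X';Y') = H(W|W\in\Wc')$ by establishing the two matching inequalities. For the $\le$ direction I would just exhibit an explicit feasible descriptor for $(X',Y')$. Put $q = \P\{W\in\Wc'\} = \sum_{w'\in\Wc'} p_W(w')$ and let $W'$ be $W$ conditioned on the event $\{W\in\Wc'\}$, i.e. $p_{W'}(w) = p_W(w)/q$ for $w\in\Wc'$, retaining the channels $p_{X|W}(\cdot|w)$ and $p_{Y|W}(\cdot|w)$. By the very definition of $p_{X',Y'}$ in the statement, the triple $(X',W',Y')$ satisfies $X'\to W'\to Y'$, so $G(X';Y') \le H(W') = H(W|W\in\Wc')$.

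For the reverse inequality I would argue by contradiction. Suppose some $W''$ with $X'\to W''\to Y'$ satisfies $H(W'') < H(W|W\in\Wc')$, and write the corresponding factorization $p_{X',Y'}(x,y) = \sum_{w''} p_{W''}(w'')\,p_{X'|W''}(x|w'')\,p_{Y'|W''}(y|w'')$. The idea is to graft $W''$ onto $W$ in place of the sub-collection $\Wc'$ and get a strictly better descriptor of the original $(X,Y)$, contradicting optimality of $W$. After relabelling so that $\mathrm{supp}(W'')$ is disjoint from $\Wc\setminus\Wc'$, let $\tilde W$ take values in $(\Wc\setminus\Wc')\sqcup\mathrm{supp}(W'')$, with $p_{\tilde W}(w) = p_W(w)$ and channels $p_{X|W}(\cdot|w),p_{Y|W}(\cdot|w)$ for $w\in\Wc\setminus\Wc'$, and $p_{\tilde W}(w'') = q\,p_{W''}(w'')$ with channels $p_{X'|W''}(\cdot|w''),p_{Y'|W''}(\cdot|w'')$ for $w''\in\mathrm{supp}(W'')$; note the masses sum to $(1-q)+q=1$. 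Then $\sum_{\tilde w} p_{\tilde W}(\tilde w)\,p_{X|\tilde W}(x|\tilde w)\,p_{Y|\tilde W}(y|\tilde w)$ reassembles exactly to $p_{X,Y}(x,y)$: the $w\in\Wc\setminus\Wc'$ terms reproduce the $\Wc\setminus\Wc'$ part of the original mixture verbatim, while the $w''$ terms sum to $q\,p_{X',Y'}(x,y)$, which is precisely the $\Wc'$ part of the original mixture. Hence $X\to\tilde W\to Y$ is a valid Markov chain.

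It then remains to compare $H(\tilde W)$ with $H(W)=G(X;Y)$. Since membership of $\tilde W$ in $\mathrm{supp}(W'')$ is a deterministic function of $\tilde W$ equal to $1$ with probability $q$, the grouping lemma for entropy gives $H(\tilde W) = H(q) + (1-q)H(W|W\notin\Wc') + q\,H(W'')$, where $H(q)$ is the binary entropy. Applying the same split to $W$ itself gives $H(W) = H(q) + (1-q)H(W|W\notin\Wc') + q\,H(W|W\in\Wc')$, so $H(W)-H(\tilde W) = q\big(H(W|W\in\Wc') - H(W'')\big) > 0$ by assumption, contradicting that $W$ attains $G(X;Y)$. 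Therefore no such $W''$ exists, i.e. $G(X';Y') \ge H(W|W\in\Wc')$, and together with the upper bound this proves the claim.

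The step I expect to require the most care is the grafting construction in the second paragraph: one must check that passing to a disjoint-union alphabet causes no collision, that the grafted kernels are bona fide conditional pmfs, and — the real content — that the resulting mixture telescopes back to $p_{X,Y}$ exactly (the $\Wc'$-block of the original mixture being replaced by the $W''$-factorization of its own normalized version). Once that identity is in hand, the entropy bookkeeping is just the grouping-lemma computation already used elsewhere in the paper.
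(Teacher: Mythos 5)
Your proof is correct and follows essentially the same route as the paper's: the upper bound by exhibiting the conditioned variable $W'$, and the lower bound by contradiction via grafting a hypothetical better descriptor $W''$ of $(X',Y')$ onto the $\Wc\setminus\Wc'$ block of $W$ and comparing entropies with the grouping identity. Your write-up is in fact a bit more careful than the paper's (which garbles the definition of the grafted variable and swaps two entropy terms in the final chain of inequalities, though the intent is clear), and you correctly read the statement's $H(X';Y')$ as the common-entropy quantity $G(X';Y')$.
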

    The proof of this proposition is in Appendix~\ref{proof:extremal}.

    We now use the above results to analytically compute $G(X;Y)$ for binary alphabets, i.e., when $|\Xc|=|\Yc|=2$.
    \begin{proposition}\label{prop:2letter}
        Let $X\sim \bern(p)$ and
        \[
            p_{Y|X}=
            \begin{bmatrix}
                \alpha & \beta \\
                \bar{\alpha} & \bar{\beta}\\
            \end{bmatrix}
        \]
        for some $\a,\b \in[0,1], \bar{\a}=1-\a, \bar{\b}=1-\b$. Let $W$ achieve $G(X;Y)$. Then either
        \begin{align*}
            p_{Y|W}&=
            \begin{bmatrix}
                \alpha & 1 \\
                \bar{\alpha} & 0  \\
            \end{bmatrix},\;\; p_{W|X}=
            \begin{bmatrix}
                    1 & \bar{\beta}/\bar{\alpha} \\
                    0 & 1-\bar{\beta}/\bar{\alpha}
            \end{bmatrix},\text{ and}\\
            W&\sim \bern\left(\bar{p}\left(1-\bar{\beta}/\bar{\alpha}\right)\right),
        \end{align*}
        or
        \begin{align*}
            p_{Y|W}&=
            \begin{bmatrix}
                0& \beta \\
                1 & \bar{\beta}
            \end{bmatrix},\;\;p_{W|X}=
            \begin{bmatrix}
                1-\alpha/\beta & 0 \\
                \alpha/\beta & 1
            \end{bmatrix},\text{ and}\\
            W&\sim \bern\left(p(1-\alpha/\beta)\right).
        \end{align*}
    \end{proposition}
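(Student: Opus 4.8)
The plan is to first use the cardinality and support results to pin down the shape of an optimal $W$, and then turn the remaining minimization into a one‑variable problem that the concavity of entropy settles.

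First I would reduce the structure. By Proposition~\ref{prop:CardBound2} we may take $|\Wc|\le 2^{\min(|\Xc|,|\Yc|)}-1=3$, and by Lemma~\ref{lem:distinctsupport} the conditional laws $p_{Y|W}(\cdot|w)$, $w\in\Wc$, have pairwise distinct supports; since $\Yc$ is binary these supports lie among $\{0\},\{1\},\{0,1\}$, so at most one value of $W$ — call it $w^*$ — can have full support, while every other value forces $Y$ to be deterministic. Relabelling $X$ (which also swaps $p\leftrightarrow\bar p$) we may assume $\alpha\le\beta$; the two alternatives in the statement will correspond to whether the surviving deterministic value forces $Y=0$ or $Y=1$. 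Writing $h(w)=p_{Y|W}(0|w)$ and $\gamma=h(w^*)\in(0,1)$, the Markov condition together with the marginal constraint is precisely $\E[h(W)|X=0]=\alpha$ and $\E[h(W)|X=1]=\beta$.

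Next I would fix $\gamma$ and observe the key simplification: the law $p_W$ depends on the kernel $p_{W|X}$ only through the single scalar $g=\P\{W=w^*\}$, and affinely so — the probability of the value forcing $Y=0$ is $\P\{Y=0\}-\gamma g$, that of the value forcing $Y=1$ is $\P\{Y=1\}-(1-\gamma)g$, and $g$ ranges over an interval $[0,g_{\max}(\gamma)]$ cut out by $0\le p_{W|X}(w|x)\le1$. Since $H(W)$ is concave in $p_W$, it is concave in $g$ on this interval, so for fixed $\gamma$ its minimum is at $g=0$ or $g=g_{\max}(\gamma)$. At $g=0$ the value $w^*$ disappears and $W$ collapses to $Y$ (or to a constant when $X\perp Y$); at $g=g_{\max}(\gamma)$ some entry of $p_{W|X}$ vanishes, removing one deterministic value on each side of the channel. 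This already rules out ``generic'' three‑valued optima and leaves, besides $W=Y$, two one‑parameter families in which a single deterministic value survives, plus — only when $\alpha<\beta$ — a residual one‑parameter family of genuinely three‑valued $W$ indexed by $\gamma\in(\alpha,\beta)$.

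Then I would finish the one‑variable minimizations. In the family where a value forces $Y=0$, the two constraints force $p_{W|X}(\cdot|0)=\big(\bar\alpha/\bar\gamma,\,1-\bar\alpha/\bar\gamma\big)$, $p_{W|X}(\cdot|1)=\big(\bar\beta/\bar\gamma,\,1-\bar\beta/\bar\gamma\big)$, feasible precisely for $\gamma\in(0,\alpha]$, with $\P\{W=w^*\}=\P\{Y=1\}/\bar\gamma$, so $H(W)=H\!\big(\P\{Y=1\}/\bar\gamma\big)$; the argument is monotone in $\gamma$ and binary entropy is unimodal, so the minimum is at an endpoint — $\gamma\to0$ gives back $W=Y$, while $\gamma=\alpha$ gives exactly the first alternative, $W\sim\bern\!\big(\bar p(1-\bar\beta/\bar\alpha)\big)$. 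The family forcing $Y=1$ is symmetric and gives the second alternative at $\gamma=\beta$. To discard $W=Y$ I would use the elementary facts $\P\{Y=0\}>\bar p(1-\bar\beta/\bar\alpha)$ and $\P\{Y=1\}>p(1-\alpha/\beta)$ — each one line of algebra, e.g. the first reduces to $\alpha\big(p+\bar p\bar\beta/\bar\alpha\big)>0$ — together with monotonicity and symmetry of binary entropy, which gives $H(Y)\ge\min$ of the two alternatives.

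The main obstacle is the residual three‑valued family for $\gamma\in(\alpha,\beta)$, where $\P\{W\text{ forces }Y=0\}=\bar p(\beta-\gamma)/\bar\gamma$ and $\P\{W\text{ forces }Y=1\}=p(\gamma-\alpha)/\gamma$: one must show that $H(W)$, as a function of $\gamma$ on $[\alpha,\beta]$, never drops below its endpoint values — which are exactly the entropies of the two alternatives. I would attack this by showing this one‑variable function has no interior local minimum on $[\alpha,\beta]$, e.g. by a sign analysis of its derivatives (the awkward terms coming from the curvature of $\gamma\mapsto\P\{\cdot\}$ have to be dominated by the negative contribution of the Hessian of $H$), or alternatively by invoking Proposition~\ref{prop:extremal}: restricting such an optimal $W$ to $\{w^*\}$ together with one of the deterministic values produces another binary‑input binary‑output source, to which the same structural analysis applies and forces $\gamma\in\{\alpha,\beta\}$. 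I expect this to be the only genuinely delicate point; everything else is bookkeeping.
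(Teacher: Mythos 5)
Your overall reduction is correct and somewhat cleaner than the paper's: rather than enumerating the $3!=6$ placements of zeros in $p_{W|X}$ as the paper does, you parameterize by $\gamma=p_{Y|W}(0\,|\,w^{*})$ and $g=\P\{W=w^{*}\}$, note that for fixed $\gamma$ the marginal $p_W$ is affine in $g$, and use concavity of $H$ to push $g$ to $0$ or to $g_{\max}(\gamma)$. This dispatches in one stroke what the paper handles as cases 1--4 of its enumeration. You also correctly isolate the genuinely hard residual case: at $g_{\max}$ with $\gamma\in(\alpha,\beta)$, all three values of $W$ survive, and one must show that $\gamma\mapsto H(W)$ on $[\alpha,\beta]$ is minimized at an endpoint (which reproduces the two claimed alternatives). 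Up to that point the argument, including the elementary comparison ruling out $W=Y$, is sound.

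The gap is that you never close this residual case, and neither of your two suggestions does. The route through Proposition~\ref{prop:extremal} is a dead end, not merely ``delicate'': if you restrict the optimal three-valued $W$ to $\Wc'=\{w^{*},w_0\}$, the induced binary source $(X',Y')$ has $p_{Y'|X'}(0\,|\,0)=\gamma$ \emph{automatically} (because at $g_{\max}$ with $\gamma>\alpha$ one has $p_{W|X}(w_0\,|\,0)=0$, hence $p_{X|W}(0\,|\,w_0)=0$, so the mixture defining $\alpha'$ degenerates to $\gamma$), so the binary characterization is already satisfied and yields no contradiction; the same happens for $\Wc'=\{w^{*},w_1\}$, and $\Wc'=\{w_0,w_1\}$ produces a degenerate $(X',Y')$ that is likewise consistent. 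The other route, a direct sign analysis of $\frac{d}{d\gamma}H(W)$, is not carried out and is not obviously tractable, precisely because the maps $\gamma\mapsto\bar p(\beta-\gamma)/\bar\gamma$ and $\gamma\mapsto p(\gamma-\alpha)/\gamma$ are each curved in opposite directions, so the second-derivative contributions you flag do not have a uniform sign. What the paper actually does here (its cases 5 and 6) is a different, essentially geometric trick: after eliminating $x$ the feasible pairs $(y,z)$ lie on the convex curve $\bar\alpha/y+\beta/z=1$, which is contained in the convex hull of $\{(1,\beta/\alpha),(\bar\alpha/\bar\beta,1),(0,0),(0,1),(1,0)\}$; since $p_W$ is affine in $(y,z)$ and $H$ is concave, the minimum over that hull is at a vertex, and the only vertices that lie on the constraint curve are the two endpoints $(1,\beta/\alpha)$ and $(\bar\alpha/\bar\beta,1)$ (the remaining three all yield $H(W)=H(p)\geq H(X)$ and can be discarded). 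That convex-hull sandwiching is the missing ingredient; without it or an equivalent, your argument does not rule out an interior $\gamma\in(\alpha,\beta)$, so the proof is incomplete.
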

    The proof of this proposition uses Lemma~\ref{lem:distinctsupport} as well as the cardinality bound $|\Wc|\le 3$ derived from Proposition \ref{prop:CardBound2}. It considers all possible cases for $W$ and finally concludes that $|\Wc|=2$ suffices. The detailed arguments can be found in Appendix~\ref{proof:2letter}.
\smallskip

\noindent{\bf Remark} (Relationship to machine learning): Computing $G(X;Y)$  is closely related to \emph{positive matrix factorization}, which has applications in recommendation systems, e.g.,~\cite{netflix}. In that problem, one wishes to factorize a matrix $M$ with positive entries in the form $M=AB$, where $A$ and $B$ are both matrices with positive entries. Indeed, finding a Markov chain $X\to W\to Y$ for a fixed $p_{X,Y}$ is akin to factorizing $p_{Y|X}=p_{Y|W}p_{W|X}$ and numerical methods such as in~\cite{algofactor} can be used. Rather than minimizing the number of factors $|\Wc|$ as is done in positive matrix factorization literature, it may be more meaningful for recommendation systems to minimize the entropy of the factors $W$. Computing $G(X;Y)$ for large alphabets appears to be very difficult, however.

\section{Conclusion}
We introduced the notion of exact common information for correlated random variables $(X,Y)$ and bounded it by the common entropy quantity $G(X;Y)$. For the exact generation of a 2-DMS, we established a multiletter characterization of the exact common information rate. While this multiletter characterization is in general greater than or equal to the Wyner common information, we showed that they are equal for the SBES. The main open question is whether the exact common information rate has a single letter characterization in general. Is it always equal to the Wyner common information? Is there an example 2-DMS for which the exact common information rate is strictly larger than the Wyner common information? It would also be interesting to further explore the application to machine learning.

\section{Acknowledgments}
The authors are indebted to Young-Han Kim for comments that greatly improved the presentation of this paper.
\appendices
\section{Computing $G(X;Y)$ for SBES}\label{proof:sbesH}
    As described in \cite{channelsynth}, Appendix A, for any $W$ satisfying $X\to W\to Y$, each $w\in \Wc$ must fall in one of the three categories:
    \begin{enumerate}
        \item If $p_{X|W=w}p_{Y|W=w}$ is positive for $Y=0$, then $X=0$ with probability $1$.
        \item If $p_{X|W=w}p_{Y|W=w}$ is positive for $Y=1$, then $X=1$ with probability $1$.
        \item The last category is when $p_{X|W=w}p_{Y|W=w}$ is zero on  $Y\in\{0,1\}$, i.e., $p_{Y|W}(e|w)=1$.
    \end{enumerate}
    Following the same reasoning in~\cite{channelsynth}, Appendix A, we conclude that we need only one $w$ in each category. Thus,

    \begin{align*}
        p_{X,Y} &=
            \begin{bmatrix}
                (1-p)/2 & 0 & p/2\\
                0& (1-p)/2 &p/2
            \end{bmatrix}\\
        &=\sum_{w\in \Wc}p_W(w)p_{X|W}(\cdot|w)p_{Y|W}(\cdot|w)^t\\
        &=  \begin{bmatrix}
                a_1 & 0 & a_2\\
                0   & 0 & 0
            \end{bmatrix}+
            \begin{bmatrix}
                0 & 0   & 0\\
                0 & b_1 & b_2
            \end{bmatrix}+
            \begin{bmatrix}
                0 & 0 & c_1\\
                0 & 0 & c_2
            \end{bmatrix}.
    \end{align*}
    Each term above corresponds to a category of $w$.
    Thus,
    \[
    p_W=\begin{bmatrix} a_1+a_2 & b_1+b_2 & c_1+c_2 \end{bmatrix}
    \]
    and we must minimize $H(W)$ such that $a_1=b_1=(1-p)/2$, $a_2+c_1=b_2+c_2=p/2$, and $b_1,b_2,c_1,c_2 \ge 0$. This results in
    \[
        G(X;Y)=\min_{0\le c_1,c_2\le p/2} H\left(1/2-c_1,1/2-c_2,c_1+c_2\right).
    \]
    Since the objective is a concave function of $c_1$ and $c_2$, the minimizing $(c_1,c_2)$ should be one of the 4 corner points $(0,0)$, $(p/2,p/2)$, $(0,p/2)$, $(p/2,0)$. The last two points are symmetric. Thus
    \[
        G(X;Y)=\min \{1,H(p)+1-p \}.
    \]

\section{Subadditivity of $G(X^n;Y^n$)}\label{proof:subadditivity}
    Suppose $W_m$ achieves $G(X^m;Y^m)$ and $W_n$ achieves $G(X^n;Y^n)$. Then, $X^m\to W_m\to Y^m$ and $X_{m+1}^{m+n}\to W_n\to Y_{m+1}^{m+n}$ form Markov chains, and so does $X^{m+n}\to (W_m,W_n)\to Y^{m+n}$. Therefore,
    \begin{align*}
        G(X^{m+n};Y^{m+n}) &\le H(W_m,W_n)\\
        &=H(W_m)+H(W_n)\\
        &=G(X^m;Y^m)+G(X^n;Y^n).
    \end{align*}
    Thus the sequence $\{G(X^n;Y^n):\, n\in\mathbb{N}\}$ is sub-additive. Hence, an appeal to Feteke's subadditivity lemma~\cite{fetekeLemma} shows that
    \[
        \lim_{n\rightarrow\infty} \frac{1}{n} G(X^n;Y^n)=\inf_{n \in \mathbb{N}} (1/n)G(X^n;Y^n).
    \]
\section{Proof of Proposition~\ref{prop:bound}} \label{proof:greater}
To show this consider
\begin{align*}
                \Gbar(X;Y)&=\lim_{n\to\infty} \min_{W:X^n\to W\to Y^n} {1\over n}H(W)\\
                &=\lim_{n\to\infty} \frac{1}{n} H(W_n^*)\\
                &\ge \lim_{n\to\infty} \frac{1}{n} I(W_n^*;X^n,Y^n)\\
                &\ge \lim_{n\to\infty} \frac{1}{n} \sum_{i=1}^n I(W_n^*;X_i,Y_i)\\
                &\ge \min_{W:X-W-Y} I(W;X,Y)\\
                &=J(X;Y).
            \end{align*}
%
\section{Proof of Proposition \ref{prop:cardbnd}}\label{proof:cardinality}
    We use the perturbation method (see Appendix C in~\cite{abbasbook}). The Markov Chain $X\to W \to Y$ is equivalent to
    \begin{align*}
        p(y|x,w) &= p(y|w),\text{ or} \\
        \frac{p(x,y,w)}{\sum_{y'} p(x,y',w)}&=\frac{\sum_{x'} p(x',y,w)}{\sum_{x',y'}p(x',y',w)}.
    \end{align*}
    Further, $\sum_w p(x,y,w)=p_{X,Y}(w)$.

    Let $p_\epsilon(x,y,w)=p(x,y,w)(1+\epsilon\phi(w))$ be a perturbed pmf, where $\epsilon$ can be either positive or negative. We first observe that
    \begin{align*}
        p_\epsilon(y|x,w) &= \frac{p_\epsilon(x,y,w)}{\sum_{y'} p_\epsilon(x,y',w)} \\
        &= \frac{p(x,y,w)}{\sum_{y'} p(x,y',w)} \\
        &=\frac{\sum_{x'}p(x',y,w)}{\sum_{x',y'}p(x',y',w)}  \\
        &=\frac{\sum_{x'}p_\epsilon(x',y,w)}{\sum_{x',y'}p_\epsilon(x',y',w)}  \\
        &=p_\epsilon(y|w).
    \end{align*}
    Thus, $p_\epsilon(x,y,w)$ also satisfies the Markov chain $X\to W \to Y$.

    Now, we also require that $p_\epsilon(x,y)=p_{X,Y}(x,y)$, i.e.,
    \begin{equation*}
        \sum_w p(x,y,w)\phi(w)=0 \text{ for all } (x,y).
    \end{equation*}
    The above equation represents $|\Xc||\Yc|$ linear constraints in $\phi(w)$. Thus if $|\mathcal{W}|>|\mathcal{X}||\mathcal{Y}|$, we can find a perturbation $\phi(w)\neq 0$ such that $p_\epsilon(x,y,w)$ satisfies $\sum_w p_\epsilon(x,y,w)=p_{X,Y}(x,y)$ as well as the Markov chain $X\to W \to Y$.

    If we choose $\e$ small enough, we can ensure that $1+\e\phi(w)\geq 0$ for all $w$ and thus $p_\epsilon(u,v,w)$ is a valid pmf.

    Note that $p_\epsilon(w)=p(w)(1+\epsilon\phi(w))$ is a linear function of $\e$. Since the entropy is a concave function, $H(p_\epsilon(w))$ is a concave function of $\epsilon$ and is minimized by an extremal $\epsilon$. Such an extremal $\epsilon$ makes $p_\epsilon(w)=0$ for some $w$, thus reducing the cardinality $|\mathcal{W}|$ by $1$ while also reducing $H(W)$. Thus we conclude for the minimizing $H(W)$, $|\mathcal{W}|\le|\mathcal{X}||\mathcal{Y}|$.
%
\section{Proof of Lemma \ref{lem:distinctsupport}}\label{proof:distinctsupport}
    We will prove by contradiction. Assume that Lemma \ref{lem:distinctsupport} is false, i.e. $\exists w_1\neq w_2$ such that the supports of $p_{Y|W}(\cdot|w_1)$ and $p_{Y|W}(\cdot|w_2)$ are the same for some $W$ that achieves $H_1(X;Y)$.{\allowdisplaybreaks

    Let $\Xc=\{1,2,\dots,|\Xc|\}$, $\Yc=\{1,2,\dots,|\Yc|\}$, etc. without loss of generality.

    For a given $w$, $p_{Y|W}(\cdot|w_1)$ is a vector in $\real^{|\Yc|}$ whose $y$-th element is the conditional probability $p_{Y|W}(y|w)$. The line segment joining $p_{Y|W}(\cdot|w_1)$ and $p_{Y|W}(\cdot|w_2)$ can be extended from both ends without crossing the boundary of the $|\Yc|-1$ dimensional probability simplex as shown in Fig. \ref{fig:higherdim}.

    Geometrically, the Markov chain $X\to W\to Y$ means that the vectors $\{p_{Y|X}(\cdot|x):x\in \Xc\}$ lie in the convex hull of the vectors $\{p_{Y|W}(\cdot|w):w\in \Wc\}$.

    \begin{figure}[h]
    	\begin{center}\footnotesize
    	\psfrag{a}[t]{$p_{Y|W'}(y|w_1)$}
         	\psfrag{b}[b]{$a$}
        	\psfrag{c}[b]{$p_{Y|W}(y|w_1)$}
    	\psfrag{d}[b]{$1$}
         	\psfrag{e}[b]{$p_{Y|W}(y|w_2)$}
        	\psfrag{f}[b]{$b$}
    	\psfrag{g}[t]{$p_{Y|W}(y|w_3)$}
        	\psfrag{h}[t]{$p_{Y|W'}(y|w_2)$}
    	\includegraphics[scale=0.87]{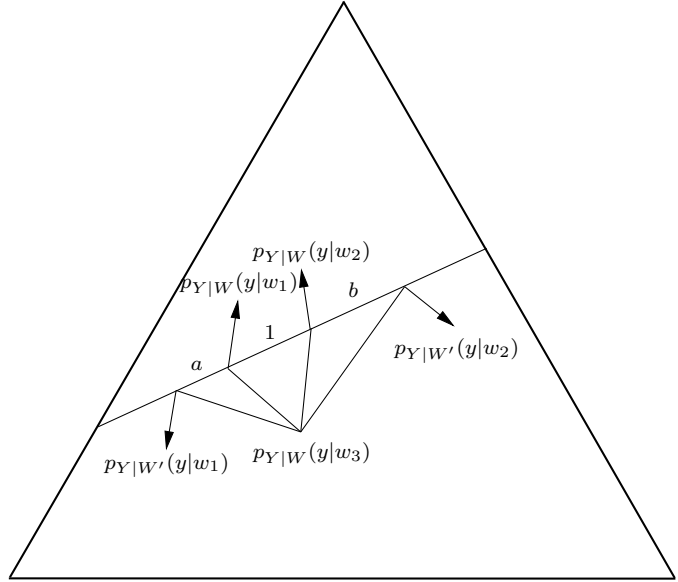}
        \caption{Proof of Lemma \ref{lem:distinctsupport}.}
        \label{fig:higherdim}
    	\end{center}
    \end{figure}
    We will construct $W'$ such that $X\to W\to W'\to Y$ forms a Markov chain and $H(W')<H(W)$. Assume w.l.o.g. that the Euclidean distance between $p_{Y|W}(\cdot|w_1)$ and $p_{Y|W}(\cdot|w_2)$ in $\real^{|\Yc|}$ is $1$ unit. Extend the line segment in either direction by $a,b$ units respectively so that it's new end-points are $p_{Y|W'}(\cdot|w_1),p_{Y|W'}(\cdot|w_2)$ (see Fig. \ref{fig:higherdim}). Then $X\to W\to W'\to Y$ forms a Markov chain, where $W'$ is defined as follows:

    \[
        p_{Y|W}(y|w)=\begin{cases}
        p_{Y|W'}(y|w)&\text{if }w\notin \{w_1,w_2\}, \\[2pt]
        \frac{b+1}{a+b+1}p_{Y|W'}(y|w_1)+\frac{a}{a+b+1}p_{Y|W'}(y|w_2) &\text{if } w=w_1, \\[2pt]
        \frac{b}{a+b+1}p_{Y|W'}(y|w_1)+\frac{a+1}{a+b+1}p_{Y|W'}(y|w_2) &\text{if } w=w_2.
        \end{cases}
    \]
    and
    \[
        p_{W'|W}(w'|w)=\begin{cases}
        \one(w'=w) &\text{if } w\notin \{w_1,w_2\}, \\[2pt]
        \frac{b+1}{a+b+1} &\text{if } (w',w)=(w_1,w_1), \\[2pt]
        \frac{a}{a+b+1} &\text{if } (w',w)= (w_2,w_1),\\[2pt]
        \frac{b}{a+b+1} &\text{if } (w',w)= (w_1,w_2),\\[2pt]
        \frac{a+1}{a+b+1} &\text{if } (w',w)= (w_2,w_2),\\[2pt]
        0& \text{otherwise.}
        \end{cases}
    \]

    Hence we have
    \[
    p_{W'}(\cdot) = \left[\frac{b+1}{a+b+1}p_{W}(w_1)+\frac{b}{a+b+1}p_{W}(w_2),\frac{a}{a+b+1}p_{W}(w_1)+ \frac{a+1}{a+b+1}p_W(w_2),p_W(3),p_W(4),\dots \right],
    \]
    and
    \begin{equation}
    \label{eqn:Hextreme}
    H(W)-H(W')=\P\{W\in\{w_1,w_2\}\}\left[H\left(p\right)-H\left(\frac{b+p}{a+b+1}\right)\right],
    \end{equation}
    where $p=\P\{W=w_1|W\in\{w_1,w_2\}\}$ does not depend on $a,b$ and $\bar{p}=1-p$. In~\eqref{eqn:Hextreme}, the term $(b+p)/(a+b+1)$ increases with $b$ and decreases with $a$. Further, when $a=b=0$, $(b+p)/(a+b+1)=p$. Hence, by perturbing either $a$ or $b$, one can make the RHS of \eqref{eqn:Hextreme} positive. In that case $H(W')<H(W)=G(X;Y)$, a contradiction.
%
\section{Proof of Proposition \ref{prop:extremal}}\label{proof:extremal}
    Define the random variable $W'\in\Wc'$ with pmf $p_{W'}(w)=\frac{p_W(w)}{\sum_{w'\in \Wc'}p_W(w')}$. Then $X'\to W'\to Y'$ and $H(W')=H(W|W\in \Wc')$. Hence $H(X';Y')\le H(W')= H(W|W\in \Wc')$.

    Now we show by contradiction that $H(X';Y')= H(W|W\in \Wc')$.

    If $H(X';Y') < H(W|W\in \Wc')$, there exists an a $\Wt\in \tilde{\Wc}$ such that $X'\to \Wt\to Y'$ and $H(\Wt)<H(W')$. We have
    \begin{align*}
        p_{X,Y}(x,y)&=\sum_{w\in\Wc} p_W(w)p_{X|W}(x|w)p_{Y|W}(y|w)\\
        &=\sum_{w\in \Wc'} p_W(w)p_{X|W}(x|w)p_{Y|W}(y|w) + \sum_{w\in \Wc \backslash \Wc'} p_W(w)p_{X|W}(x|w)p_{Y|W}(y|w)\\
        &=\P\{W\in \Wc'\}p_{X',Y'}(x,y) + \sum_{w\in \Wc\backslash\Wc'} p_W(w)p_{X|W}(x|w)p_{Y|W}(y|w)\\
        &=\P\{W\in \Wc'\}\sum_{w\in \tilde{\Wc}}p_{\tilde{W}}(w)p_{X'|\tilde{W}}(x|w)p_{Y'|\tilde{W}}(y|w) + \sum_{w\in \Wc\backslash\Wc'} p_W(w)p_{X|W}(x|w)p_{Y|W}(y|w).
    \end{align*}
    Thus, if we define
    \[
        W''=
        \begin{cases}
            W,& \text{if } W\in \Wc',\\
            \Wt,& \text{if } W\notin \Wc'.
        \end{cases}
    \]
    Then $X\to W''\to Y$ forms a Markov chain. Also
    \begin{align*}
        H(W'')&\stackrel{(a)}{=}H\left(\P\{W\in \Wc'\}\right)+\P\{W\in \Wc'\}H(W')+\P\{W\notin \Wc'\}H(W|W\notin \Wc')\\
        &<H\left(\P\{W\in \Wc'\}\right)+\P\{W\in \Wc'\}H(\Wt)+\P\{W\notin \Wc'\}H(W|W\notin \Wc')\\
        &\stackrel{(b)}{=}H(W),
    \end{align*}
    where $(a)$ and $(b)$ follow since $H(X,\Theta)=H(\Theta)+H(X|\Theta)$. Thus we obtain $G(X;Y)<H(W)$, a contradiction.}
%
\section{Proof of Proposition \ref{prop:2letter}}\label{proof:2letter}
    Proposition \ref{prop:CardBound2} implies $|\Wc|\le 3$.

    Geometrically, the Markov chain $X\to W\to Y$ means that the vectors $\{p_{Y|X}(\cdot|x):x\in \Xc\}$ lie in the convex hull of the vectors $\{p_{Y|W}(\cdot|w):w\in \Wc\}$.

    When we restrict $|\Wc|=2$, an application of Lemma \ref{lem:distinctsupport} together with the observation that $p_{W|X}(w|x)=0 \iff p_{X|W}(x|w)=0$ and the Markovity $X\to W\to Y$ immediately results in the two cases described in Proposition \ref{prop:2letter}.

    Similarly, when we let $|\Wc|=3$, by Lemma \ref{lem:distinctsupport}, the supports of $p_{Y|W}(\cdot|w)$ are different for each $w\in \Wc$. Therefore, we have
    \[
    p_{Y|W}=\begin{bmatrix}
        1 & 0 & x\\
        0 & 1 & \xb\\
    \end{bmatrix},
    \]
    where each column corresponds to a value of $W$ and $0\le x\le 1$.

    Similarly $p_{W|X}$ can be one of the 6 cases below. Each row corresponds to a $W$. Since there are $3$ rows, there are $3!=6$ cases.
    \[
    p_{W|X}\in \left\{
    \begin{bmatrix}
        y & z \\
        \yb & 0\\
        0 & \zb
    \end{bmatrix},
    \begin{bmatrix}
        y & z \\
        0 & \zb\\
        \yb & 0
    \end{bmatrix},
    \begin{bmatrix}
        0 & \zb \\
        y & z\\
        \yb & 0
    \end{bmatrix},
    \begin{bmatrix}
        \yb & 0 \\
        y & z\\
        0 & \zb
    \end{bmatrix},
    \begin{bmatrix}
        \yb & 0 \\
        0 & \zb\\
        y & z
    \end{bmatrix},
    \begin{bmatrix}
        0 & \zb \\
        \yb & 0\\
        y & z
    \end{bmatrix}
    \right\},
    \]
    for appropriately chosen numbers $y,z$ between 0 and 1.

    Assume
    \[
    p_{Y|X}=\begin{bmatrix}
        \a & \b \\
        \bar{\a} & \bar{\b}
    \end{bmatrix}
    \]
    and $X\sim\bern(p)$.
    We first consider case $1$ for $p_{W|X}$.
    In that case, we have the Markovity $p_{Y|W}p_{W|X}=p_{Y|X}$, i.e.,
    \[
    \begin{bmatrix}
        1 & 0 & x\\
        0 & 1 & \xb\\
    \end{bmatrix}
    \begin{bmatrix}
        y & z \\
        \yb & 0\\
        0 & \zb
    \end{bmatrix}
    =\begin{bmatrix}
        \a & \b \\
        \bar{\a} & \bar{\b}
    \end{bmatrix},
    \]
    which yields $y=\a$ and $z+\zb x=\b$ or $z=\frac{\b-x}{\xb}$, hence $x\le \b$.
    Also
    \[
    p_W=p_{W|X}p_X=\begin{bmatrix}
        py+\pb z \\
        p\yb \\
        \pb \zb
    \end{bmatrix}.
    \]
    $y=\a$ is fixed, and $z$ decreases monotonically with $x$. Since $p_W$ is a linear function of $z$, the optimal $z$ that minimizes the concave function $H(W)$ should lie at one of the end-points of $z$. The two end-points are $z=0$ (corresponding to $x=\b$) and $z=\b$ (corresponding to $x=0$). When $z=0$, $H(W)>H(\pb)=H(X)$ and may be eliminated because $H_1(X;Y)\le H(X)$. When $z=\b$, we have $x=0$, hence $p_{Y|W}(y|1)=p_{y|W}(y|2)$ for all $y$.

    In this case, we can replace $W$ by it's sufficient statistic w.r.t $Y$, namely
    \[
    T(W)=\begin{cases}
    1, & \text{if }$w=0$\\
    0, & \text{otherwise.}
    \end{cases}
    \]
    By property \ref{prop:minsuff} from Section \ref{sec:1-shot} this operation reduces the cardinality $|\Wc|$ and increases the entropy.

    Cases $2,3,4$ for $p_{W|X}$ are very similar to case $1$. Now consider case $5$. In this case, Markovity implies
    \[
    \begin{bmatrix}
        1 & 0 & x\\
        0 & 1 & \xb\\
    \end{bmatrix}
    \begin{bmatrix}
        \yb & 0 \\
        0 & \zb\\
        y & z
    \end{bmatrix}
    =\begin{bmatrix}
        \a & \b \\
        \bar{\a} & \bar{\b}
    \end{bmatrix}
    \]
    and
    \[
    p_W=\begin{bmatrix}
        p\yb \\
        \pb \zb \\
        py+\pb z.
    \end{bmatrix}
    \]
    Thus we want to solve the following optimization problem:
    \begin{align*}
    &\text{Minimize }H(p\yb,\pb \zb,py+\pb z) \\
    &\text{Subject to } \yb+xy=\a, xz=\b.
    \end{align*}
    We can eliminate variable $x$ from the above constraints to obtain a constraint
    \[
        \frac{\bar{\a}}{y}+\frac{\b}{z}=1.
    \]
    We want to show that for the optimal $(y^*,z^*)$, either $y^*=1$ or $z^*=1$. In that case, the support of $W$ reduces to size $2$, thus showing that $|\Wc|=2$ suffices.
    To show this, we observe that $\{(y,z): \frac{\bar{\a}}{y}+\frac{\b}{z}=1\}$ lies in the convex hull of the points
    $\{(1,\b/\a),(\bar{\a}/\bar{\b},1), (0,0),(0,1),(1,0) \}$. Since $H()$ is a concave function, it's minimum in the convex hull should be one of the boundary points. It is easy to see that the minimum must be one of $\{(1,\b/\a),(\bar{\a}/\bar{\b},1)\}$. These 2 points satisfy the original constraint
    $\frac{\bar{\a}}{y}+\frac{\b}{z}=1$. Hence, either $y^*=1$ or $z^*=1$.

    Case $6$ is similar to case $5$.

\bibliographystyle{IEEEtran}
\bibliography{myBib}

\end{document}